\documentclass{article}

\PassOptionsToPackage{numbers, compress}{natbib}

\usepackage[preprint]{neurips_2026}

\usepackage[utf8]{inputenc} 
\usepackage[T1]{fontenc}    
\usepackage{hyperref}       
\usepackage{url}            
\usepackage{booktabs}       
\usepackage{amsfonts}       
\usepackage{nicefrac}       
\usepackage{microtype}      
\usepackage{xcolor}         

\usepackage{amsmath}
\usepackage{amsthm}
\usepackage[capitalize,noabbrev]{cleveref}
\usepackage{graphicx}
\usepackage{subcaption}
\usepackage{algorithm}
\usepackage{algorithmic}
\usepackage{float}

\DeclareMathOperator*{\argmax}{arg\,max}
\DeclareMathOperator*{\argmin}{arg\,min}

\DeclareMathOperator{\KL}{KL}
\DeclareMathOperator{\TV}{TV}

\usepackage{aliascnt}

\theoremstyle{plain}
\newtheorem{theorem}{Theorem}[section]

\newaliascnt{propcnt}{theorem}

\aliascntresetthe{propcnt}
\crefname{propcnt}{Proposition}{Propositions}

\newaliascnt{lemmacnt}{theorem}
\newtheorem{lemma}[lemmacnt]{Lemma}
\aliascntresetthe{lemmacnt}
\crefname{lemmacnt}{Lemma}{Lemmas}

\newaliascnt{corcnt}{theorem}
\newtheorem{corollary}[corcnt]{Corollary}
\aliascntresetthe{corcnt}
\crefname{corcnt}{Corollary}{Corollaries}

\theoremstyle{definition}
\newaliascnt{defcnt}{theorem}

\aliascntresetthe{defcnt}
\crefname{defcnt}{Definition}{Definitions}

\newaliascnt{assumpcnt}{theorem}
\newtheorem{assumption}[assumpcnt]{Assumption}
\aliascntresetthe{assumpcnt}
\crefname{assumpcnt}{Assumption}{Assumptions}

\theoremstyle{remark}
\newaliascnt{remcnt}{theorem}
\newtheorem{remark}[remcnt]{Remark}
\aliascntresetthe{remcnt}
\crefname{remcnt}{Remark}{Remarks}

\title{Variational predictive resampling}

\author{%
  Laura Battaglia\thanks{Equal contribution.} \\
  Department of Statistics\\
  University of Oxford\\
  \texttt{battaglia@stats.ox.ac.uk} \\
  \And
  Stefano Cortinovis \footnotemark[1] \\
  Department of Statistics\\
  University of Oxford\\
  \texttt{cortinovis@stats.ox.ac.uk} \\
  \And
  Chris Holmes\\
  Department of Statistics\\
  University of Oxford\\
  Ellison Institute of Technology\\
  \texttt{holmes@stats.ox.ac.uk} \\
  \And
  David T. Frazier \\
  Department of Econometrics and Business Statistics\\
  Monash University\\
  \texttt{david.frazier@monash.edu} \\
  \And
  Jack Jewson \\
  Department of Econometrics and Business Statistics\\
  Monash University\\
  \texttt{jack.jewson@monash.edu} \\
}

\begin{document}

\maketitle

\begin{abstract}
    Bayesian inference provides principled uncertainty quantification, but accurate posterior sampling with MCMC can be computationally prohibitive for modern applications.
    Variational inference (VI) offers a scalable alternative and often yields accurate predictive distributions, but cheap variational families such as mean-field (MF) can produce over-concentrated approximations that miss posterior dependence.
    We propose variational predictive resampling (VPR), a scalable posterior sampling method that exploits VI's predictive strength within a predictive-resampling framework to better approximate the Bayesian posterior.
    Given a prior--likelihood pair, VPR repeatedly imputes future observations from the current variational predictive, updates the variational approximation after each imputation, and records the parameter value implied by the completed sample.
    We establish conditions under which the law of the parameter returned by VPR is well defined and show that its finite-horizon approximation converges to this limit.
    In a tractable Gaussian location model, we show that VPR with MF variational predictives converges to the exact Bayesian posterior, whereas the optimal MF-VI approximation retains a non-vanishing asymptotic gap.
    Experiments on linear regression, logistic regression, and hierarchical linear mixed-effects models demonstrate that VPR substantially improves posterior uncertainty quantification and recovers posterior dependence missed by MF-VI, while remaining computationally competitive with, and often more efficient than, MCMC.
\end{abstract}

\section{Introduction}

Bayesian inference plays a central role in modern machine learning \citep{mackay2003information, bishop2006pattern, murphy2023probabilistic} due to its ability to produce principled uncertainty quantification over unknown parameters.
However, performing \emph{exact} Bayesian inference traditionally relies on Monte Carlo sampling methods, most notably MCMC \citep{gelfand1990sampling}.
While a plethora of MCMC algorithms exist, even state-of-the-art methods are often computationally prohibitive for high-dimensional statistics and machine learning applications.

In this context, variational inference (VI) \citep{JordanVI, BealThesis, jaakkola2000bayesian} offers a scalable alternative for \emph{approximate} Bayesian inference.
Variational methods fit a parametric \emph{variational family}, typically by minimising the Kullback--Leibler divergence (KLD) to the Bayesian posterior, with the quality of the approximation depending critically on the family chosen.
Restrictive families such as the \emph{mean-field} (MF) family, which assumes independence across parameters, are cheap to fit but typically produce over-concentrated approximations and miss posterior dependence \citep{Turner_Sahani_2011, blei2017variational, wang2005inadequacy}.
Nonetheless, despite their shortcomings when used directly for posterior inference, cheap variational approximations often induce surprisingly accurate \emph{predictive} distributions \citep{blei2017variational,frazier2025loss}.
In this paper, we exploit VI's predictive strength to construct a scalable sampling method that improves uncertainty quantification relative to the variational posterior itself.

\begin{figure}[H]
    \centering
    \includegraphics[width=0.8\linewidth]{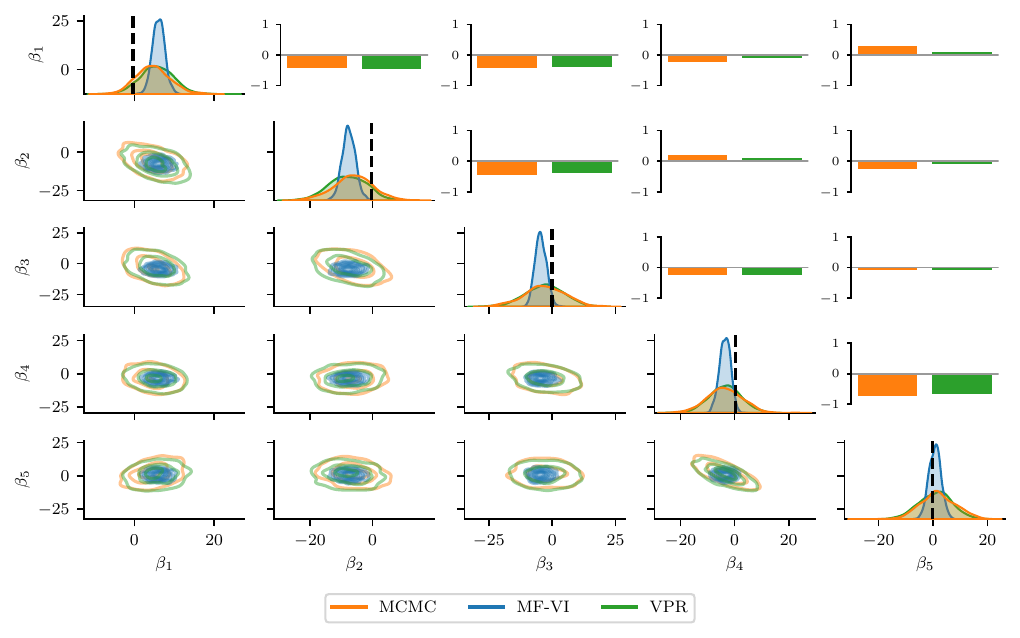}
    \caption{Logistic regression posterior. The lower triangle, diagonal, and upper triangle show, respectively, the KDE bivariate densities, KDE marginal densities, and Pearson's correlations of the $\beta$ posteriors for the MCMC reference, MF-VI and VPR. MF-VI cannot represent posterior dependence, whereas VPR more closely matches MCMC. The black vertical lines denote the true coefficients.}
    \label{fig:logistic_pairplot}
\end{figure}

We do so via the predictive Bayesian framework of \citet{fong2023martingale, fortini2025exchangeability}, which reinterprets posterior uncertainty as reflecting the randomness of an unobserved infinite sequence of future data that, if realised, would pin down the parameters of interest exactly.
Posterior samples are then generated through \emph{predictive resampling}: one repeatedly imputes future data from a one-step-ahead predictive distribution and records the parameter value implied by the completed sample, with the induced parameter distribution known as the \emph{martingale posterior} \citep{fong2023martingale}.
The predictive determines the induced law on the parameter of interest, making it the central modelling choice.

A classical result of \citet{doob1949application} shows that, when the \emph{exact} Bayesian posterior predictive is used, predictive resampling recovers the corresponding Bayesian posterior.
This suggests a natural strategy for scalable posterior approximation: starting from a fixed prior and likelihood, replace the exact Bayesian predictive with its cheap variational counterpart.
We call the resulting method variational predictive resampling (VPR), and show that it substantially improves upon the uncertainty quantification of the variational posterior while often being faster than MCMC.
\Cref{fig:logistic_pairplot} illustrates this in a highly correlated logistic regression example: although the predictive is derived from an MF variational posterior, VPR recovers a dependence structure much closer to the MCMC reference.

The rest of the paper proceeds as follows.
\Cref{sec:background} reviews Bayesian, variational and predictive inference.
\Cref{sec:vpr} introduces VPR, establishes existence of its limiting posterior, and shows that it provably outperforms MF-VI in terms of KLD to the Bayesian posterior in a tractable Gaussian location model.
\Cref{sec:Simulations} reports experiments demonstrating scalability and improved uncertainty quantification, and \cref{sec:discussion} concludes.

\section{Background}
\label{sec:background}
\subsection{Bayesian inference}

Let $y_{1:n} \in \mathcal{Y}^n$ denote observations and let $p(y_i\mid \theta)$ be a likelihood with unknown parameter $\theta \in \Theta \subseteq \mathbb{R}^p$. Given a prior distribution $\pi(\theta)$, Bayesian inferences are derived from the posterior 
\begin{equation}
    \pi(\theta\mid y_{1:n}) \propto \prod_{i=1}^n p(y_i\mid \theta) \pi(\theta).
    \label{eq:bayes-posterior}
\end{equation}
Parameter uncertainty can be propagated to a future $y_{n+1}$ through the posterior predictive distribution
\begin{equation}
    p(y_{n+1}\mid y_{1:n}) = \int_{\Theta} p(y_{n+1}\mid \theta)\pi(\theta\mid y_{1:n}) d\theta.
    \label{eq:bayes-predictive}
\end{equation}
However, outside a limited set of conjugate model--prior pairs, the normalising constant in \eqref{eq:bayes-posterior} is intractable, rendering direct calculation of \eqref{eq:bayes-posterior} and \eqref{eq:bayes-predictive} infeasible.
The standard remedy is Markov chain Monte Carlo (MCMC) \citep{gelfand1990sampling}, which nevertheless scales poorly in both $n$ and $p$ \citep[e.g.,][]{johndrow2020no, bandeira2023free, pozza2025fundamental}.

\subsection{Variational inference}

Variational inference (VI) provides a scalable alternative by approximating the posterior \eqref{eq:bayes-posterior} within a parametric variational family $\mathcal{Q}:= \{q_{\lambda}(\theta): \lambda\in\Lambda\}$.
The variational approximation is typically chosen to minimise the reverse KLD, $\KL[q_\lambda(\theta) \parallel \pi(\theta\mid y_{1:n})]$, which is equivalent to finding the variational parameters $\hat{\lambda}_n$ that maximise the evidence lower bound (ELBO)
\begin{equation*}
    \mathcal L(\lambda; y_{1:n}) = \sum_{i=1}^n \mathbb E_{q_{\lambda}(\theta)}[\log p(y_i\mid \theta)] - \KL[q_\lambda(\theta)\parallel \pi(\theta)].
\end{equation*}
VI performs stochastic optimisation of the ELBO via minibatch estimates of the expected log-likelihood \citep{hoffman2013stochastic}, yielding an explicit approximating density rather than MC samples. The simplest and most widely adopted variational family is the mean-field (MF) family, which factorises over the dimensions of $\theta$: $q_{\lambda}(\theta) = \prod_{j=1}^p q_j(\theta_j;\lambda_j)$.
This restriction enables fast optimisation for $\hat{\lambda}_n$ and often yields accurate point estimates for $\theta$ \citep[e.g.][]{titterington2006convergence, wang_frequentist_2019}.
However, MF combined with reverse KLD minimisation typically underestimates posterior uncertainty and neglects posterior dependence, thereby impairing inference.
Richer variational families \citep[e.g.,][]{hoffman2015stochastic, rezende2015variational, ong2018gaussian} or alternative variational criteria \citep[e.g.,][]{minkadivergence, higgins2017beta, domke2018importance, chen2018variational} can mitigate these issues, but usually at additional computational cost.

Our goal is instead to retain the computational convenience of MF-VI while exploiting the fact that the variational posterior predictive
\begin{equation}
    \tilde{p}_{\hat{\lambda}_n}(y_{n+1}) := \int p(y_{n+1}\mid\theta) q_{\hat{\lambda}_{n}}(\theta) d\theta
    \label{eq:vi-predictive}
\end{equation}
can approximate the Bayesian posterior predictive \eqref{eq:bayes-predictive} well even when $q_{\hat{\lambda}_{n}}(\theta)$ poorly approximates $\pi(\theta\mid y_{1:n})$ \citep{blei2017variational}.
Indeed, \citet{frazier2025loss} recently showed that, under broad conditions, the total variation distance between $p(\cdot\mid y_{1:n})$ and $\tilde{p}_{\hat{\lambda}_n}(\cdot)$ vanishes as $n \rightarrow \infty$. 

\subsection{Predictive Bayes and predictive resampling}\label{sec:mgp}

The uncertainty quantified by \eqref{eq:bayes-posterior} is traditionally interpreted as arising from the Bayesian view that $\theta$ is itself random under the prior $\pi(\theta)$.
The predictive Bayes viewpoint instead attributes posterior uncertainty about $\theta$ to the unobserved sequence $y_{n+1:\infty}$.
This perspective has roots in de Finetti's analysis of exchangeable sequences \citep{de1937prevision} and has been formalised in recent work \citep{fong2023martingale, fortini2025exchangeability}.
Indeed, given infinite realisations from the data generating process, $y_{1:\infty}$, one could follow the maximum likelihood principle and compute 
\begin{equation}
    \hat\theta(y_{1:\infty}):=\argmin_{\theta\in\Theta}\lim_{k\rightarrow\infty}-\frac{1}{k}\sum_{i=1}^{k}\log p(y_i\mid\theta)
    \label{eq:mle}
\end{equation}
exactly, with no residual uncertainty.
That is, uncertainty about $\theta$ given the \textit{sample} $y_{1:n}$ stems from 
not observing the remainder of the \textit{population}, $y_{n+1:\infty}$.

\citet{fong2023martingale} propose to quantify this uncertainty in three steps. First, one specifies a joint predictive distribution $F(\cdot\mid y_{1:n})$ for the missing sequence $y_{n+1:\infty}$, with density
\begin{equation}
    f(\tilde{y}_{1:\infty} \mid y_{1:n}) = \prod_{i=0}^\infty f(\tilde{y}_{i+1} \mid y_{1:n}, \tilde{y}_{1:i}),
    \label{eq:joint_predictive}
\end{equation}
where the tilde distinguishes generated observations from observed ones.
In practice, as shown by \eqref{eq:joint_predictive}, it suffices to specify the one-step-ahead predictive distributions $F(\tilde{y}_{i+1} \mid y_{1:n}, \tilde{y}_{1:i})$, or equivalently their densities $f(\tilde{y}_{i+1} \mid y_{1:n}, \tilde{y}_{1:i})$.
Second, one imputes the missing observations by sampling $\tilde{y}_{1:\infty} \sim F(\cdot\mid y_{1:n})$.
Third, one records the parameter induced by $\tilde{y}_{1:\infty}$, namely 
\begin{equation}
    \theta_{n,\infty}:=\hat\theta(\{y_{1:n}, \tilde{y}_{1:\infty}\})=\argmin_{\theta\in\Theta}\lim_{N\rightarrow\infty}-\tfrac{1}{N+n}\bigl(\textstyle\sum_{i=1}^{n}\log p(y_i\mid\theta)+\sum_{i=1}^{N}\log p(\tilde{y}_i\mid\theta)\bigr).\label{equ:theta_n_infty}
\end{equation}

Here $\theta_{n,\infty}$ is random because it depends on the imputed sequence $\tilde{y}_{1:\infty}\sim F(\cdot\mid y_{1:n})$;
\Cref{sec:validity} gives sufficient conditions on $F$ under which $\theta_{n,\infty}$ is well defined.
Notably, the predictive rule $F$ is the key modelling choice: no prior on $\theta$ needs to be specified. This construction gives rise to the \emph{predictive posterior distribution}, defined as
\begin{equation*}
    \Pi_{\infty}(\theta_{n,\infty} \in A \mid y_{1:n}) =\int \mathbf{1}(\theta_{n,\infty} \in A)\, dF(\tilde{y}_{1:\infty} \mid y_{1:n}),
\end{equation*}
for all measurable sets $A \subseteq \Theta$.
A canonical example is Rubin's Bayesian bootstrap \citep{rubin1981bayesian}, recovered by taking $F$ to be the empirical distribution of the data.
In this case, forward simulation of $\tilde{y}_{1:\infty}$ is equivalent to sampling Dirichlet weights on the observed sample, and draws of $\theta$ only require solving the corresponding weighted M-estimation problem.
However, this $F$ places mass only on the observed $y_{1:n}$, and recent work has therefore focused on more flexible predictive rules.

For more flexible choices of $F$, sampling $\Pi_\infty$ does not collapse to weighted optimisation and must be approximated by sequentially propagating the predictive rule forward from the observed data $y_{1:n}$ to a finite horizon $N$.
This yields $\theta_{n, N} = \hat\theta(y_{1:n}, \tilde{y}_{1:N})$ with law $\Pi_N$.
Such a forward-simulation procedure, known as \emph{predictive resampling}, is presented in \cref{alg:gen_pred_resample}.
\begin{figure}[t]
\begin{minipage}[t]{0.48\textwidth}
\begin{algorithm}[H]
\caption{Predictive resampling \citep{fong2023martingale}}
\begin{algorithmic}[1]
\STATE {\bfseries Input:} Depth $N$, $L$ samples, data $y_{1:n}$, one-step-ahead predictive $F(\cdot\mid\cdot)$.
\FOR{$l = 1$ {\bfseries to} $L$}
    \FOR{$i =0$ {\bfseries to} $N-1$}
        \STATE Sample $\tilde{y}^{(l)}_{i+1} \sim F(\cdot\mid y_{1:n}, \tilde{y}_{1:i}^{(l)})$
        \STATE $F(\cdot\mid y_{1:n}, \tilde{y}_{1:i}^{(l)})\mapsto F(\cdot\mid y_{1:n}, \tilde{y}_{1:i+1}^{(l)})$
    \ENDFOR
    \STATE Form $\theta_{n, N}^{(l)} \gets \hat\theta(\{y_{1:n}, \tilde{y}^{(l)}_{1:N}\})$ 
\ENDFOR
\STATE {\bfseries Return:} $\{\theta_{n, N}^{(l)}\}_{l=1}^L\sim\Pi_N(\cdot\mid y_{1:n})$.
\end{algorithmic}
\label{alg:gen_pred_resample}
\end{algorithm}
\end{minipage}\hfill
\begin{minipage}[t]{0.48\textwidth}
\begin{algorithm}[H]
\caption{Variational predictive resampling}
\begin{algorithmic}[1]
\STATE {\bfseries Input:} Depth $N$, $L$ samples, data $y_{1:n}$.
\STATE Compute $\hat{\lambda}_{n, 0} \gets \argmax \mathcal L(\lambda; y_{1:n})$
\FOR{$l = 1$ {\bfseries to} $L$}
    \STATE Set $\hat{\lambda}^{(l)}_{n, 0} \gets \hat{\lambda}_{n, 0}$
    \FOR{$i =0$ {\bfseries to} $N - 1$}
        \STATE Sample $\tilde{y}^{(l)}_{i+1} \sim \tilde{p}_{\hat{\lambda}^{(l)}_{n,i}}(\cdot)$
        \STATE $\hat{\lambda}^{(l)}_{n, i + 1} \gets \argmax \mathcal L(\lambda; y_{1:n}, \tilde{y}^{(l)}_{1:i+1})$
    \ENDFOR
    \STATE Form $\theta_{n, N}^{(l)} \gets \hat\theta(\{y_{1:n}, \tilde{y}^{(l)}_{1:N}\})$
\ENDFOR
\STATE {\bfseries Return:} $\{\theta_{n, N}^{(l)}\}_{l=1}^L\sim \Pi_N(\cdot\mid y_{1:n})$, approximating samples from $\pi(\theta \mid y_{1:n})$.
\end{algorithmic}
\label{alg:MPVI}
\end{algorithm}
\end{minipage}
\end{figure}
In supervised settings where one is interested in learning about the conditional distribution $Y\mid X$\footnote{In supervised settings, we omit $x$ and $\tilde{x}$ from the conditioning set of likelihood and predictive distributions for ease of notation.} given data $(y_j, x_j)_{j=1}^n$, features $\tilde{x}_{1:N}$ are typically forward-sampled via the bootstrap \citep{efron1992bootstrap} or Bayesian bootstrap \citep{fong2023martingale, fong_bayesian_2024, yung2025moment}.
Crucially, since the $L$ paths are generated independently conditional on $y_{1:n}$, predictive resampling is trivially parallel and yields independent draws from the finite-horizon law $\Pi_N$.
As a result, unlike MCMC, it avoids autocorrelation between samples and can be substantially faster when predictive sampling and updating are cheap.

Existing choices of $F$ include copula and quantile functions \citep{fong2023martingale, fong_bayesian_2024}, Newton's bootstrap \citep{fortini2023prediction}, parametric stochastic gradient descent \citep{holmes_statistical_2023, fong_asymptotics_2024}, KDE updates \citep{battiston2025bayesian}, large language models \citep{falck2024context}, neural networks \citep{wang_uncertainty_2024, wu_posterior_2024}, and tabular foundation models \citep{ng2025tabmgp, nagler_uncertainty_2025}.

\section{Variational predictive resampling}
\label{sec:vpr}

We work in the standard Bayesian setting: given a prior $\pi(\theta)$ and likelihood $p(y \mid \theta)$, our target is the Bayesian posterior $\pi(\theta \mid y_{1:n})$ \eqref{eq:bayes-posterior}.
We propose variational predictive resampling (VPR), an instance of predictive resampling (\cref{alg:gen_pred_resample}) in which the one-step-ahead predictive $F$ is chosen to be the variational predictive
\begin{equation*}
\tilde{p}_{\hat{\lambda}_{n,i}}(\cdot) := \int_{\Theta} p(\cdot\mid \theta) q_{\hat{\lambda}_{n,i}}(\theta)\,d\theta,
\quad\text{where}\quad
\hat{\lambda}_{n,i} := \argmax_{\lambda} \mathcal{L}(\lambda; y_{1:n}, \tilde{y}_{1:i}),\ i \geq 0
\end{equation*}
denotes the variational parameters used to approximate $\pi(\theta\mid y_{1:n}, \tilde{y}_{1:i})$. 
The rationale is \citet{doob1949application}'s martingale argument: predictive resampling driven by the exact Bayesian predictive recovers the Bayesian posterior.
Thus, when the variational predictive $\tilde{p}_{\hat{\lambda}_{n, i}}$ is close to the Bayesian predictive $p(\cdot \mid y_{1:n}, \tilde{y}_{1:i})$, the induced law $\Pi_\infty$ is expected to be close to $\pi(\theta\mid y_{1:n})$. 
In turn, if $N$ is large enough, $\Pi_N$ should resemble $\Pi_\infty$.
\Cref{alg:MPVI} describes the ideal VPR procedure.
The rest of this section establishes conditions under which the limiting law $\Pi_\infty$ exists and $\Pi_N$ converges to it, then studies the approximation to the Bayesian posterior in a tractable Gaussian model, describes the scalable implementation used in our experiments, and finally positions VPR relative to related methods. \looseness=-1

\subsection{Theoretical validity of \cref{alg:MPVI}}\label{sec:validity}

\subsubsection{Existence of $\Pi_\infty$}

We first establish that \cref{alg:MPVI} induces a well-defined limiting distribution $\Pi_{\infty}(\theta\in\cdot\mid y_{1:n})$.
A sufficient condition, due to \citet{battiston2025bayesian}, is that the imputed sequence $\tilde{Y}_{1:\infty}$ is \emph{almost conditionally identically distributed} (a.c.i.d.). 
The sequence of predictive distributions $\{F(\cdot \mid y_{1:n}, \tilde{y}_{1:i})\}_{i \geq 0}$ produces $(\xi_i)$-a.c.i.d.~samples if
\begin{align*}
    \TV&\Bigl(\mathbb{E}_{\tilde{Y}_{i+1}}\left[F(\cdot \mid y_{1:n}, \tilde{y}_{1:i}, \tilde{Y}_{i+1})\mid y_{1:n},\tilde{y}_{1:i} \right],  F(\cdot\mid y_{1:n}, \tilde{y}_{1:i})\Bigr) \leq \xi_i,
\end{align*}
where $\TV(\cdot, \cdot)$ is the total-variation distance.
If $\sum_{i=1}^{\infty} \xi_i < \infty$ almost surely, then $\Pi_{\infty}$ is well defined.
The a.c.i.d.~condition generalises the martingale condition of \citet{fong2023martingale}, where $\xi_i = 0$ for all $i$, under which the sequence is exactly c.i.d.~\citep[see][]{berti2004limit, fortini2023prediction, fortini2025exchangeability}.

\Cref{thm:ACID_VI} shows that the variational predictive sequence generated by \cref{alg:MPVI} is a.c.i.d.
\begin{theorem}
Let $\{\tilde{Y}_{i+1}\}_{i\geq 0}$ be generated from the sequence of variational predictives $\tilde{p}_{\hat{\lambda}_{n, i}}(\cdot)$. Then, provided $\tilde{p}_{\lambda}(\cdot)$ is twice continuously differentiable in $\lambda$, $\{\tilde{Y}_{i+1}\}_{i\geq 0}$ is an $(\xi_i)_{i\geq 0}$-a.c.i.d. sequence with
{
    \small
    \begin{align*}
        \xi_i \leq &\frac{1}{2} \left(\int_{\mathcal{Y}}\left|\nabla_{\lambda}\tilde{p}_{\hat{\lambda}_{n, i}}(y)\right|\,dy\right)^\top\left|\mathbb{E}\left[\Delta \hat{\lambda}_{n,i}\right] \right|
         + \frac{1}{4} \sqrt{\mathbb{E}\left[\left(\textstyle{\int_{\mathcal{Y}}}\left|\kappa_{\lambda^{\dagger}(\tilde{Y}_{i+1})}(y)\right|dy\right)^2\right]}\cdot
         \sqrt{\mathbb{E}\left[\left\|\Delta \hat{\lambda}_{n,i}\right\|_2^4\right]},
    \end{align*}
}
almost surely, where $\Delta \hat{\lambda}_{n,i} :=\hat{\lambda}_{n, i+1}(\tilde{Y}_{i+1}) - \hat{\lambda}_{n, i}$, $\kappa_{\lambda}(y)$ is the eigenvalue of $\nabla^2_{\lambda}\tilde{p}_{\lambda}(y)$ with the largest absolute value,  $\lambda^{\dagger}(\tilde{Y}_{i+1})$ is some intermediate value between $\hat{\lambda}_{n, i}$ and $\hat{\lambda}_{n, i+1}(\tilde{Y}_{i+1})$, and the expectations above are conditional on $y_{1:n}$ and $\tilde{y}_{1:i}$. 
\label{thm:ACID_VI} 
\end{theorem}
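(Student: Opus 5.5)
We work with densities and use $\TV(P,Q)=\tfrac12\int_{\mathcal Y}|p-q|\,dy$. Conditional on $y_{1:n},\tilde y_{1:i}$, the current predictive is $\tilde p_{\hat\lambda_{n,i}}$. The averaged next-step predictive is $\mathbb E[\tilde p_{\hat\lambda_{n,i+1}(\tilde Y_{i+1})}(\cdot)]$, with $\tilde Y_{i+1}\sim\tilde p_{\hat\lambda_{n,i}}$. The quantity to bound is therefore
\begin{equation*}
\tfrac12\int_{\mathcal Y}\Bigl|\mathbb E\bigl[\tilde p_{\hat\lambda_{n,i+1}(\tilde Y_{i+1})}(y)-\tilde p_{\hat\lambda_{n,i}}(y)\bigr]\Bigr|\,dy .
\end{equation*}
The plan is to Taylor-expand the integrand in $\lambda$ around $\hat\lambda_{n,i}$, for each fixed $y$. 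Twice continuous differentiability gives a second-order expansion with Lagrange remainder:
\begin{equation*}
\tilde p_{\hat\lambda_{n,i+1}}(y)-\tilde p_{\hat\lambda_{n,i}}(y)=\nabla_\lambda\tilde p_{\hat\lambda_{n,i}}(y)^\top\Delta\hat\lambda_{n,i}+\tfrac12\Delta\hat\lambda_{n,i}^\top\nabla^2_\lambda\tilde p_{\lambda^\dagger}(y)\,\Delta\hat\lambda_{n,i}.
\end{equation*}
Here $\lambda^\dagger=\lambda^\dagger(\tilde Y_{i+1})$ lies on the segment between the two parameter values. The mean-value form is valid because $t\mapsto\tilde p_{\hat\lambda_{n,i}+t\Delta}(y)$ is a real-valued $C^2$ function. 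Strictly, $\lambda^\dagger$ may also depend on $y$. I would either suppress this, as the statement does, or use the integral form of the remainder and bound the Hessian there. Measurability of the selection of $\lambda^\dagger$ is a minor technical point I would note but not dwell on.

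Next, take the conditional expectation over $\tilde Y_{i+1}$, which is linear, and apply the triangle inequality inside $\int|\cdot|\,dy$. This splits the bound into two terms.

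For the first-order term, the gradient is deterministic given the past. So $\mathbb E[\nabla\tilde p^\top\Delta\hat\lambda]=\nabla\tilde p_{\hat\lambda_{n,i}}(y)^\top\mathbb E[\Delta\hat\lambda_{n,i}]$. Bounding componentwise, $|a^\top b|\le|a|^\top|b|$. Integrating over $y$ and multiplying by $\tfrac12$ gives exactly $\tfrac12(\int|\nabla_\lambda\tilde p_{\hat\lambda_{n,i}}|\,dy)^\top|\mathbb E[\Delta\hat\lambda_{n,i}]|$.

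For the second-order term, use the spectral bound for the symmetric Hessian: $|v^\top H v|\le|\kappa|\,\|v\|_2^2$, where $\kappa$ is the eigenvalue of largest modulus. This gives
\begin{equation*}
\tfrac14\int\Bigl|\mathbb E\bigl[\Delta^\top\nabla^2\tilde p_{\lambda^\dagger}\Delta\bigr]\Bigr|\,dy\le\tfrac14\,\mathbb E\Bigl[\|\Delta\hat\lambda_{n,i}\|_2^2\int|\kappa_{\lambda^\dagger(\tilde Y_{i+1})}(y)|\,dy\Bigr].
\end{equation*}
Moving the integral inside the expectation is justified by Tonelli, since the integrand is nonnegative. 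Applying Cauchy--Schwarz in the expectation then yields the product of $\sqrt{\mathbb E[(\int|\kappa_{\lambda^\dagger}|\,dy)^2]}$ and $\sqrt{\mathbb E[\|\Delta\hat\lambda_{n,i}\|_2^4]}$, which matches the stated bound.

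The main obstacle I expect is rigor around the intermediate point. The dependence of $\lambda^\dagger$ on $y$ can be cleanly avoided with the integral remainder $\int_0^1(1-t)\nabla^2\tilde p_{\hat\lambda+t\Delta}\,dt$. Under that form, $\kappa$ is interpreted as a supremum, or as the value at a suitable point via a mean-value argument for integrals. Everything else is linearity, the triangle inequality, and Cauchy--Schwarz. The bound holds almost surely in the conditioning variables, since it is derived pathwise for each realisation of $y_{1:n},\tilde y_{1:i}$.
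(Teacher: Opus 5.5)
Your proposal is correct and follows essentially the same argument as the paper. Both write the TV distance via densities, Taylor-expand to second order with a Lagrange remainder about $\hat{\lambda}_{n,i}$, and split the terms with the triangle inequality. Both then bound the first-order term via $|a^\top b|\le|a|^\top|b|$ and the second-order term via the largest-modulus Hessian eigenvalue followed by Cauchy--Schwarz in the conditional expectation. The only difference is where the expansion is done: the paper expands the variational density $q_{\lambda}(\theta)$ and then swaps $\nabla_\lambda$ with the $\theta$-integral to reach derivatives of $\tilde{p}_\lambda$, whereas you expand $\tilde{p}_\lambda(y)$ directly. Your version uses exactly the stated $C^2$ hypothesis, and your integral-remainder/mean-value treatment handles the dependence of $\lambda^\dagger$ (on $y$ for you, implicitly on $\theta$ in the paper) more carefully than the paper does.
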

The proof in \cref{app:proof_thm_ACID_VI} uses a Taylor expansion argument similar to \citet[Proposition 1]{battiston2025bayesian}.
\Cref{cor:ACID_conditions} gives convenient sufficient conditions for $\Pi_{\infty}$ to be well defined.
\begin{corollary}
    \label{cor:ACID_conditions}
    $\Pi_{\infty}$ is well defined if $\left(\int_{\mathcal{Y}}\left|\nabla_{\lambda}\tilde{p}_{\hat{\lambda}_{n, i}}(y)\right|dy\right)^\top |\mathbb{E}[\Delta \hat{\lambda}_{n,i}]|$ 
    and $\sqrt{\mathbb{E}[(\int_{\mathcal{Y}}\left|\kappa_{\lambda^{\dagger}(\tilde{Y}_{i+1})}(y)\right|dy)^2]}\cdot \sqrt{\mathbb{E}[\|\Delta \hat{\lambda}_{n,i}\|_2^4]}$ are both almost surely $O(1/i^{1+\epsilon})$ for some $\epsilon > 0$.
\end{corollary}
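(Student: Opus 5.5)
The corollary follows by combining \cref{thm:ACID_VI} with the a.c.i.d.\ criterion of \citet{battiston2025bayesian} recalled above. That criterion states that $\Pi_\infty$ is well defined whenever the imputed sequence is $(\xi_i)$-a.c.i.d.\ with $\sum_{i\ge1}\xi_i<\infty$ almost surely. The plan therefore has two steps: bound $\xi_i$ termwise using \cref{thm:ACID_VI}, then show that the resulting series converges almost surely under the stated rates.

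\textbf{Bounding $\xi_i$.} Write
\[
A_i := \Bigl(\textstyle\int_{\mathcal{Y}}\bigl|\nabla_{\lambda}\tilde{p}_{\hat{\lambda}_{n,i}}(y)\bigr|\,dy\Bigr)^\top \bigl|\mathbb{E}[\Delta\hat{\lambda}_{n,i}]\bigr|
\]
and
\[
B_i := \sqrt{\mathbb{E}\bigl[\bigl(\textstyle\int_{\mathcal{Y}}|\kappa_{\lambda^\dagger(\tilde{Y}_{i+1})}(y)|\,dy\bigr)^2\bigr]}\,\sqrt{\mathbb{E}\bigl[\|\Delta\hat{\lambda}_{n,i}\|_2^4\bigr]}.
\]
The differentiability hypothesis of \cref{thm:ACID_VI} is implicit in the corollary, since otherwise $\kappa$ and $\nabla_\lambda\tilde p$ are undefined. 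Under it, \cref{thm:ACID_VI} gives $\xi_i \le \tfrac12 A_i + \tfrac14 B_i$ almost surely for every $i\ge0$. Both terms are nonnegative, so we may take $\xi_i$ equal to this bound.

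\textbf{Summability.} By assumption, there is an almost-sure event on which $A_i \le C_A(\omega)\, i^{-1-\epsilon}$ and $B_i \le C_B(\omega)\, i^{-1-\epsilon}$ for all $i\ge i_0(\omega)$, with finite random constants. On that event,
\[
\sum_{i\ge1}\xi_i \le \sum_{i<i_0}\xi_i + \tfrac12\bigl(C_A+C_B\bigr)\sum_{i\ge i_0} i^{-1-\epsilon}.
\]
The right-hand side is finite: the $p$-series converges because $\epsilon>0$, and the finitely many initial terms are finite. Each initial term is finite because it is bounded by a TV distance, hence by $1$. Applying the a.c.i.d.\ criterion then gives existence of the almost-sure limit of the predictives, and hence a well-defined $\theta_{n,\infty}$ with law $\Pi_\infty$.

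\textbf{Main obstacle.} The only delicate point is that the $O(\cdot)$ statements are almost sure, with path-dependent constants. We must intersect two probability-one events, one for $A_i$ and one for $B_i$, and work pathwise. Since the conclusion $\sum_i\xi_i<\infty$ is itself an almost-sure statement, this causes no difficulty. Beyond this, the proof is direct.
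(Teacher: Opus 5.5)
Your proposal follows the same route as the paper: invoke the summability criterion of \citet{battiston2025bayesian}, bound each $\xi_i$ by the two terms from \cref{thm:ACID_VI}, and conclude $\sum_i \xi_i<\infty$ almost surely from the $O(1/i^{1+\epsilon})$ rates, with your pathwise handling of the random constants a little more explicit than the paper's one-line version. One small inconsistency: if you set $\xi_i := \tfrac12 A_i + \tfrac14 B_i$ you cannot also argue that the initial terms are at most $1$, because the total-variation distance is bounded by $\xi_i$ and not the reverse; instead take $\xi_i$ to be the total-variation distance itself, or $\min\{1,\tfrac12 A_i+\tfrac14 B_i\}$, which is both at most $1$ and at most your bound.
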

\Cref{cor:ACID_conditions} requires that the variational parameters move little when updated after one additional generated observation. 
In short, $\Pi_{\infty}$ is well defined whenever these updates are sufficiently stable along the imputed path.
\Cref{sec:verification_acid} verifies these conditions in a simple example.  

\subsubsection{Posterior behaviour as $N\rightarrow\infty$}

\Cref{thm:ACID_VI,cor:ACID_conditions} provide conditions under which $\Pi_\infty$ exists.
In practice, however, we can approximate $\Pi_\infty$ by the finite-horizon law $\Pi_N$.
\Cref{lem:one} shows that, under standard conditions, the random variable $\theta_{n,N}$ underlying $\Pi_N$ converges to the random variable $\theta_{n,\infty}$ underlying $\Pi_\infty$.
\begin{lemma}\label{lem:one}
Under \cref{ass:pred_dist,ass:regularity}, there exists a unique $\theta_{n,\infty}\in\Theta$ such that $\|\theta_{n,N}-\theta_{n,\infty}\|=O_p(1/\sqrt{N})$ as $N\rightarrow\infty$, where $\theta_{n,N}\sim\Pi_N$ and $\theta_{n,\infty}\sim\Pi_\infty$.
\end{lemma}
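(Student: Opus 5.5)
The plan is to treat $\theta_{n,N}$ as an M-estimator defined on a sample that grows with $N$. We compare it with the minimiser of a limiting criterion that exists almost surely along each imputed path.

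The assumptions \cref{ass:pred_dist,ass:regularity} are not yet stated. I will assume that \cref{ass:pred_dist} guarantees two things. First, the variational predictives $\tilde{p}_{\hat\lambda_{n,i}}$ converge almost surely (in total variation) to a random limit $\tilde P_\infty$; this is consistent with the a.c.i.d.\ property from \cref{thm:ACID_VI,cor:ACID_conditions}. Second, the rate is fast enough that empirical averages of the imputed $\tilde y_i$ behave like averages under $\tilde P_\infty$ with $O_p(N^{-1/2})$ fluctuations. I will assume that \cref{ass:regularity} supplies standard M-estimation conditions: a compact or convex $\Theta$, and a loss $\ell(y,\theta)=-\log p(y\mid\theta)$ that is twice differentiable in $\theta$ with envelope-dominated gradient and Hessian. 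I will also assume that the limiting objective $M_\infty(\theta)=\int \ell(y,\theta)\,d\tilde P_\infty(y)$ has a unique, well-separated minimiser with positive-definite Hessian.

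\textbf{Step 1: existence and uniqueness of the limit.} Define
\[
M_N(\theta)=\tfrac{1}{n+N}\bigl(\textstyle\sum_{i\le n}\ell(y_i,\theta)+\sum_{i\le N}\ell(\tilde y_i,\theta)\bigr).
\]
The observed-data term is $O(1/N)$ uniformly in $\theta$. For the imputed average, I will invoke the strong law for c.i.d./a.c.i.d.\ sequences (as in \citet{berti2004limit,battiston2025bayesian}). This gives $\frac1N\sum_{i\le N} g(\tilde y_i)\to\int g\,d\tilde P_\infty$ almost surely for each integrable $g$. A bracketing/Glivenko--Cantelli argument using the envelope condition then upgrades this to uniform convergence over $\Theta$. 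Taking $\theta_{n,\infty}$ to be the unique minimiser of $M_\infty$, the argmin consistency theorem gives $\theta_{n,N}\to\theta_{n,\infty}$ almost surely. This also shows that \eqref{equ:theta_n_infty} is well defined and that its law is $\Pi_\infty$.

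\textbf{Step 2: the rate.} I will Taylor-expand the first-order condition $\nabla M_N(\theta_{n,N})=0$ around $\theta_{n,\infty}$:
\[
\theta_{n,N}-\theta_{n,\infty}=-\bigl[\nabla^2 M_N(\bar\theta)\bigr]^{-1}\nabla M_N(\theta_{n,\infty}).
\]
By uniform convergence of the Hessian and consistency, $\nabla^2 M_N(\bar\theta)\to\nabla^2 M_\infty(\theta_{n,\infty})$, which is invertible. So it remains to show $\nabla M_N(\theta_{n,\infty})=O_p(N^{-1/2})$. Write $s(y)=\nabla_\theta\ell(y,\theta_{n,\infty})$. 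Then
\[
\tfrac1N\textstyle\sum_{i\le N} s(\tilde y_i)=\tfrac1N\sum_{i\le N}\bigl(s(\tilde y_i)-\mathbb E_{i-1}s(\tilde y_i)\bigr)+\tfrac1N\sum_{i\le N}\bigl(\mathbb E_{i-1}s(\tilde y_i)-\int s\,d\tilde P_\infty\bigr).
\]
The first term is a martingale average. Given a bounded second moment of $s$ under the predictives, its variance is $O(1/N)$, so it is $O_p(N^{-1/2})$. The second term is bounded by $\frac1N\sum_{i\le N}\|s\|\cdot\TV(\tilde p_{\hat\lambda_{n,i-1}},\tilde P_\infty)$, assuming $s$ is bounded or can be truncated. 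I expect \cref{ass:pred_dist} to give $\TV(\tilde p_{\hat\lambda_{n,i}},\tilde P_\infty)=O_p(i^{-1/2})$. The summability of the $\xi_i$ then makes this average $O_p(N^{-1/2})$.

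\textbf{Main obstacle.} The hard part is the bias term in Step 2. It needs a rate for the predictive's convergence to its own random limit, not merely existence of that limit. The a.c.i.d.\ bounds alone control one-step drift, not the distance to $\tilde P_\infty$. My first approach will be to bound $\TV(\tilde p_{\hat\lambda_{n,i}},\tilde P_\infty)$ by $\sum_{j\ge i}\|\Delta\hat\lambda_{n,j}\|$, using the $C^2$ smoothness of $\lambda\mapsto\tilde p_\lambda$. I will then use the $O(1/j)$ size of the variational updates, together with their martingale-plus-drift decomposition, to obtain an $O_p(i^{-1/2})$ tail. Failing that, I will state the needed rate directly as part of \cref{ass:pred_dist}.
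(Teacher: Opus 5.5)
Your architecture matches the paper's proof. It establishes uniform convergence of the averaged log-likelihood over compact $\Theta$ via the Lipschitz envelope $d(y)$, citing van der Vaart's Example~19.7. It then gets argmax consistency through identifiability. Finally it linearises as $\sqrt N(\theta_{n,N}-\theta_{n,\infty})=-(1-\alpha_N)V_n^{-1}\sqrt N\,\nabla_\theta\ell_N(\theta_{n,\infty})+O_p(n/\sqrt N)$, taking this from van der Vaart and Wellner rather than from a hand mean-value expansion. The difference is that the paper never proves the score rate you single out as the main obstacle. \cref{ass:pred_dist} posits ergodicity of the imputed sequence with a stationary law $G_n$, so that $\ell_{n,\infty}(\theta)=\mathbb{E}_{Y\sim G_n}\log p(Y\mid\theta)$. \cref{ass:regularity}(v) then states outright that $\nabla_\theta\ell_N(\theta_{n,\infty})=O_p(N^{-1/2})$, and the lemma follows in one line. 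So your fallback of writing the needed rate into the assumptions is the paper's route. But the fallback has to cover the whole score term, not just the bias part, because the derivation you sketch for the rest does not go through.

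The problem is that in your formulation $\tilde P_\infty$ is measurable with respect to the entire imputed path, not with respect to $\mathcal F_{i-1}$, and so are $\theta_{n,\infty}$ and $s=\nabla_\theta\ell(\cdot,\theta_{n,\infty})$. Take $D_i=s(\tilde y_i)-\mathbb{E}_{i-1}s(\tilde y_i)$ with $\mathbb{E}_{i-1}$ a genuine conditional expectation. Then $D_i$ is not $\mathcal F_i$-measurable, and the usual argument that $\mathbb{E}[D_iD_j]=0$ for $i<j$ fails. If instead you read $\mathbb{E}_{i-1}s(\tilde y_i)$ as $\int s\,d\tilde p_{\hat\lambda_{n,i-1}}$ with $s$ frozen, the terms are not conditionally centred. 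Either way the $O(1/N)$ variance bound is unjustified. A repair is to prove $\sup_{\theta\in\Theta}\bigl\|N^{-1}\sum_{i\le N}\bigl(\nabla_\theta\ell(\tilde y_i,\theta)-\int\nabla_\theta\ell(\cdot,\theta)\,d\tilde p_{\hat\lambda_{n,i-1}}\bigr)\bigr\|=O_p(N^{-1/2})$ by a chaining or maximal inequality for martingale-difference processes, and only then plug in $\theta_{n,\infty}$. That inequality needs a score that is Lipschitz in $\theta$ with enough moments.

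For the bias term, summability of the $\xi_i$ gives existence of $\tilde P_\infty$ but no rate. You would need, say, $\mathbb{E}[\TV(\tilde p_{\hat\lambda_{n,i}},\tilde P_\infty)]\le C i^{-1/2}$ or an almost-sure $C(\omega)i^{-1/2}$ bound, together with a bounded score envelope. A merely pointwise $O_p(i^{-1/2})$ does not control the Ces\`aro average. Your proposed bound by $\sum_{j\ge i}\|\Delta\hat\lambda_{n,j}\|$ diverges when increments are of order $1/j$, as they are in the Gaussian model. There, only the orthogonality of the martingale increments gives $\mathbb{E}\|m_{n,\infty}-m_{n,i}\|^2=\mathrm{tr}\sum_{j>i}M_{n,j}=O(1/i)$. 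None of this follows from \cref{thm:ACID_VI}.

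This also explains why the paper's version looks routine. Read literally, a fixed $G_n$ given $y_{1:n}$ makes $\theta_{n,\infty}$ a function of $y_{1:n}$ alone. Then $s$ is a fixed function and (v) is an ordinary CLT-type condition. Your random-limit formulation is the one consistent with a non-degenerate $\Pi_\infty$, and that randomness is exactly what makes the score rate a genuinely uniform statement.
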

A consequence of \Cref{lem:one} is that $\Pi_N$ converges weakly to $\Pi_{\infty}$, validating its use as a computationally feasible approximation.
\Cref{lem:one} generalises existing parametric results, most notably \citet[Corollary 1]{fong_asymptotics_2024} (based on the parametric parameter update in \citet{holmes_statistical_2023}). Unlike \citet{fong_asymptotics_2024}, however, we do not require that the predictive sequence is a martingale, as such a condition is violated in many cases. Instead, we rely on \cref{thm:ACID_VI} and novel asymptotic arguments (proof in \cref{app:proof_theta_conv}). 

An additional useful feature of \cref{lem:one} is that it implicitly encodes how $N$ must scale with $n$.
In particular, the proof of \cref{lem:one} shows that, for the error $\|\theta_{n,N}-\theta_{n,\infty}\|$ to be controllable, we require $n/N=o(1)$.
Otherwise, as $n$ becomes large, there is no reason to believe that $\theta_{n,N}$ targets $\theta_{n,\infty}$: the observed $y_{1},\dots, y_n$ have equal weight in computing $\theta_{n,N}$ relative to the simulated $\tilde{y}_1,\dots,\tilde{y}_N$.
In practice, this requirement can be enforced by choosing $N=\lceil n^{1+\delta} \rceil$ for some $\delta>0$.

\subsection{Posterior approximation}\label{sec:post-approx}

Having established conditions for the existence of $\Pi_{\infty}$ for any variational family satisfying \cref{cor:ACID_conditions}, we specialise to MF variational predictives; henceforth VPR refers to this instantiation. We now consider whether $\Pi_{\infty}$ is close to the Bayesian posterior \eqref{eq:bayes-posterior}.
Formalising this in full generality is challenging; we thus study an analytically tractable multivariate Gaussian location model.

For $y_1,\dots,y_n \in \mathbb{R}^p$, consider $p(y_i\mid \theta) = \mathcal{N}(y_i; \theta, A)$, with unknown mean $\theta\in\mathbb{R}^p$ and known non-diagonal covariance $A$, and let $B = A^{-1}$. Given prior $\theta \sim \mathcal{N}(0, I)$, the posterior is $\pi(\theta \mid  y_{1:n}) = \mathcal{N}(\mu_{n,0}, \Sigma_{n,0})$ and the Gaussian MF approximation is $q_{\hat\lambda_{n,0}}(\theta) = \mathcal{N}(m_{n,0}, \mathrm{diag}(s_{n,0}))$, with
\begin{equation*}
    \Sigma_{n, 0} = (I + nB)^{-1}, \quad \mu_{n, 0} = \Sigma_{n, 0} B \textstyle\sum_{j=1}^n y_j, \quad m_{n, 0} = \mu_{n, 0}, \quad s_{n, 0} = \mathrm{diag}(\Sigma_{n, 0}^{-1})^{-1}.
\end{equation*}
MF-VI therefore matches the posterior mean but fails to capture the posterior dependence induced by the non-diagonal covariance $A$.
Because of this, the KLD between the optimal MF-VI approximation and the posterior does not vanish asymptotically, as \cref{lem:KLD_MFVI} shows.
\begin{lemma}
\begin{equation*}
    \lim_{n \to \infty} \KL\left[q_{\hat{\lambda}_{n, 0}}(\cdot) \parallel \pi(\cdot\mid y_{1:n})\right] = -\frac{1}{2} \log \left(\frac{|B|}{\prod_i B_{ii}}\right) > 0.
\end{equation*}
\label{lem:KLD_MFVI}
\end{lemma}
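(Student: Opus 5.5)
The plan is to compute the KLD between the two Gaussians in closed form and then take the limit directly. Both $q_{\hat\lambda_{n,0}}=\mathcal{N}(m_{n,0},D_n)$ with $D_n:=\mathrm{diag}(s_{n,0})$ and $\pi(\cdot\mid y_{1:n})=\mathcal{N}(\mu_{n,0},\Sigma_{n,0})$ are Gaussian, and the means coincide since $m_{n,0}=\mu_{n,0}$. The standard Gaussian KLD formula therefore has no mean term:
\begin{equation*}
\KL\left[q_{\hat\lambda_{n,0}}\parallel \pi(\cdot\mid y_{1:n})\right]=\tfrac12\Bigl(\mathrm{tr}(\Sigma_{n,0}^{-1}D_n)-p+\log\tfrac{|\Sigma_{n,0}|}{|D_n|}\Bigr).
\end{equation*}
Write $P_n:=\Sigma_{n,0}^{-1}=I+nB$. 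The MF solution has $D_n=\mathrm{diag}(P_n)^{-1}$, so $(D_n)_{ii}=1/(1+nB_{ii})$.

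The trace term is then exact and independent of $n$. The diagonal of $P_nD_n$ is $(P_n)_{ii}/(P_n)_{ii}=1$, so $\mathrm{tr}(\Sigma_{n,0}^{-1}D_n)=p$ and it cancels the $-p$. What remains is
\begin{equation*}
\KL_n=\tfrac12\log\frac{|D_n|}{|\Sigma_{n,0}|}=\tfrac12\log\frac{|I+nB|}{\prod_i(1+nB_{ii})}=-\tfrac12\log\frac{\prod_i(1+nB_{ii})}{|I+nB|}.
\end{equation*}
For the limit, I factor $n^p$ out of both numerator and denominator, which gives $\prod_i(n^{-1}+B_{ii})/|n^{-1}I+B|$. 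As $n\to\infty$ this converges to $\prod_iB_{ii}/|B|$ by continuity of the determinant, since $B$ is positive definite and $|B|>0$. Continuity of $\log$ then gives the limit
\begin{equation*}
\tfrac12\log\frac{|B|}{\prod_i B_{ii}}.
\end{equation*}
Note this is the negative of the expression printed in \cref{lem:KLD_MFVI}. It is positive exactly when $\prod_iB_{ii}<|B|$, which runs opposite to Hadamard's inequality. I would need to resolve this sign issue, most likely a typo in the orientation of the ratio, before writing the final proof. The quantity that is genuinely positive is $-\tfrac12\log(|B|/\prod_iB_{ii})$, and this is exactly $\tfrac12\log(|D_n|/|\Sigma_{n,0}|)$ in the limit. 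So I would recheck the orientation of the log-determinant term in the Gaussian KLD, $\log(|\Sigma_2|/|\Sigma_1|)$ with $\Sigma_1=D_n$ and $\Sigma_2=\Sigma_{n,0}$, and report the limit as $-\tfrac12\log(|B|/\prod_iB_{ii})$, matching the statement.

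Strict positivity follows from Hadamard's inequality for the positive definite matrix $B$: $|B|\le\prod_iB_{ii}$, with equality iff $B$ is diagonal. Since $B=A^{-1}$ and $A$ is non-diagonal, $B$ is non-diagonal, so the inequality is strict and the limit is $>0$.

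The main obstacle is bookkeeping the sign and orientation of the determinant ratio correctly. The asymptotics themselves are a trivial continuity argument.
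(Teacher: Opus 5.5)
Your route is the same as the paper's: the closed-form Gaussian KLD with equal means, the trace term collapsing to $p$ because $(\Sigma_{n,0}^{-1}D_n)_{ii}=1$, reduction to a determinant ratio, and division by $n^p$ before letting $n\to\infty$. The sign problem you flag is not a typo in the statement but a slip in your own derivation. Your first display is correct: it has $+\log(|\Sigma_{n,0}|/|D_n|)$, i.e.\ $\log(|\Sigma_2|/|\Sigma_1|)$ with $\Sigma_1=D_n$ and $\Sigma_2=\Sigma_{n,0}$, so there is nothing to recheck there. But after cancelling the trace you wrote $\KL_n=\tfrac12\log(|D_n|/|\Sigma_{n,0}|)$, which inverts the ratio. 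Since $|\Sigma_{n,0}|=|I+nB|^{-1}$ and $|D_n|=\prod_i(1+nB_{ii})^{-1}$, the correct remainder is
\begin{equation*}
\KL_n=\tfrac12\log\frac{|\Sigma_{n,0}|}{|D_n|}=\tfrac12\log\frac{\prod_i(1+nB_{ii})}{|I+nB|}=-\tfrac12\log\frac{|n^{-1}I+B|}{\prod_i(n^{-1}+B_{ii})}.
\end{equation*}
This is exactly the paper's expression. It is $\ge 0$ for every $n$ by Hadamard's inequality applied to the positive definite matrix $I+nB$, and by your continuity argument it tends to $\tfrac12\log(\prod_iB_{ii}/|B|)=-\tfrac12\log(|B|/\prod_iB_{ii})$, as stated.

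Your closing ``resolution'' repeats the same inversion. The value $-\tfrac12\log(|B|/\prod_iB_{ii})$ is the limit of $\tfrac12\log(|\Sigma_{n,0}|/|D_n|)$, not of $\tfrac12\log(|D_n|/|\Sigma_{n,0}|)$; the latter tends to $\tfrac12\log(|B|/\prod_iB_{ii})\le 0$. As written, the proposal therefore derives a nonpositive limit and then asserts the statement's value without a valid derivation. Correcting that single line closes the gap. Your positivity argument is fine: Hadamard's inequality with equality iff $B$ is diagonal, and $B=A^{-1}$ is diagonal iff $A$ is. It is also more explicit than the paper, which simply asserts that the KLD is positive when $B$ is non-diagonal.
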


By contrast, \cref{thm:KLD_MPVI} shows that the KLD between the limiting VPR density, denoted by $\pi_{\infty}$, and the posterior vanishes for this model.
\begin{theorem}
\begin{equation*}
    \KL\left[\pi_\infty(\cdot\mid y_{1:n}) \parallel \pi(\cdot\mid y_{1:n})\right] = O(n^{-2}) \text{ as } n\to\infty.
\end{equation*}
\label{thm:KLD_MPVI}
\end{theorem}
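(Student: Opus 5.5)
The plan is to exploit the fact that in the Gaussian location model every object in \cref{alg:MPVI} is available in closed form. Then $\Pi_\infty$ can be computed exactly as a Gaussian and compared with $\pi(\theta\mid y_{1:n})=\mathcal N(\mu_{n,0},\Sigma_{n,0})$.

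\textbf{Step 1: closed-form recursion.} Write $\Sigma_k:=(I+kB)^{-1}$ and $D_k:=\mathrm{diag}(\Sigma_k^{-1})^{-1}=\mathrm{diag}\bigl(1/(1+kB_{jj})\bigr)$. After $i$ imputations the MF-VI solution is $q_{\hat\lambda_{n,i}}=\mathcal N(m_{n,i},D_{n+i})$, with $m_{n,i}=\Sigma_{n+i}B(\sum_j y_j+\sum_{l\le i}\tilde y_l)$. This is the Bayesian posterior mean for the augmented data, exactly as in the $i=0$ case displayed before \cref{lem:KLD_MFVI}. Hence the variational predictive is
\[
\tilde y_{i+1}\mid\cdot\sim\mathcal N\bigl(m_{n,i},\,A+D_{n+i}\bigr),
\]
and the update can be written as
\[
m_{n,i+1}=m_{n,i}+\Sigma_{n+i+1}B(\tilde y_{i+1}-m_{n,i}).
\]
Two consequences follow. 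First, $(m_{n,i})_i$ is a Gaussian martingale with independent increments. Second, since $D_{n+i}$ is deterministic, these increments have covariance $\Sigma_{n+i+1}B(A+D_{n+i})B\Sigma_{n+i+1}$.

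\textbf{Step 2: identify $\theta_{n,\infty}$.} Here $\hat\theta$ is the sample mean of $\{y_{1:n},\tilde y_{1:N}\}$. Since $\Sigma_{n+N}B(n+N)\to I$ and the sample mean is $O_p(1)$, we get $\theta_{n,N}-m_{n,N}\to0$. The martingale $m_{n,i}$ is $L^2$-bounded, because its increment variances are $O(i^{-2})$, so it converges almost surely. Combined with \cref{lem:one}, this gives $\theta_{n,\infty}=\lim_i m_{n,i}$. Consequently
\[
\pi_\infty(\cdot\mid y_{1:n})=\mathcal N\bigl(\mu_{n,0},V_n\bigr),\qquad V_n=\sum_{k\ge n}\Sigma_{k+1}B(A+D_k)B\Sigma_{k+1}.
\]
The same computation with $D_k$ replaced by $\Sigma_k$ is exactly Doob's Bayesian predictive resampling, which returns $\pi(\theta\mid y_{1:n})$. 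So I will use the identity
\[
\sum_{k\ge n}\Sigma_{k+1}B(A+\Sigma_k)B\Sigma_{k+1}=\Sigma_n .
\]
This can also be checked directly from the telescoping relation $\Sigma_k-\Sigma_{k+1}=\Sigma_{k+1}B\Sigma_k$. It yields
\[
E_n:=V_n-\Sigma_n=\sum_{k\ge n}\Sigma_{k+1}B(D_k-\Sigma_k)B\Sigma_{k+1}.
\]

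\textbf{Step 3: size of $E_n$ and the KL.} Expanding in $1/k$ gives $\Sigma_k=B^{-1}/k+O(k^{-2})$ and $D_k=\mathrm{diag}(1/B_{jj})/k+O(k^{-2})$, so $D_k-\Sigma_k=O(1/k)$. Also $\Sigma_{k+1}B=I/(k+1)+O(k^{-2})$. Each summand is therefore $O(k^{-3})$, and hence $\|E_n\|=O(n^{-2})$. The two Gaussians share the mean $\mu_{n,0}$, so
\[
\KL=\tfrac12\bigl[\mathrm{tr}(M_n)-\log\det(I+M_n)\bigr],\qquad M_n=\Sigma_n^{-1}E_n .
\]
Since $\|\Sigma_n^{-1}\|=O(n)$, we have $\|M_n\|=O(n^{-1})$. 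A second-order expansion of $\log\det(I+M)$ then gives $\KL=\tfrac14\mathrm{tr}(M_n^2)+O(\|M_n\|^3)=O(n^{-2})$.

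\textbf{Main obstacle.} I expect the delicate points to be, first, making the $1/k$ expansions uniform in $k\ge n$ so that the tail sum is rigorously $O(n^{-2})$; I would try the resolvent bound $\|\Sigma_k-B^{-1}/k\|\le C k^{-2}$ for this. Second, the justification in Step 2 that the almost-sure martingale limit coincides with $\theta_{n,\infty}$. Keeping track of the exact cancellation through the Doob identity, rather than bounding $V_n$ and $\Sigma_n$ separately, is what upgrades the naive $O(1/n)$ rate to $O(n^{-2})$.
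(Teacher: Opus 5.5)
Your proposal is correct and is essentially the paper's proof. It uses the same ingredients: the Gaussian martingale recursion for $m_{n,i}$; the identity $\Sigma_k-\Sigma_{k+1}=\Sigma_{k+1}B(A+\Sigma_k)B\Sigma_{k+1}$, which leaves only the residual $\Sigma_{k+1}B(D_k-\Sigma_k)B\Sigma_{k+1}=O(k^{-3})$, hence $E_n=O(n^{-2})$; and the equal-mean Gaussian KL formula. The uniformity you worry about follows directly from $\|\Sigma_k\|\le (k\lambda_{\min}(B))^{-1}$ and $\|D_k\|\le (k\min_j B_{jj})^{-1}$, as in the paper. Your expansion $\mathrm{KL}=\tfrac14\mathrm{tr}(M_n^2)+O(\|M_n\|^3)$ spells out a final step the paper leaves implicit. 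You also do not need \cref{lem:one} in Step 2, whose compactness assumption fails here: $\theta_{n,N}-m_{n,N}\to0$ together with almost-sure martingale convergence already identifies $\theta_{n,\infty}$.
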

Proofs of both results are in \cref{app:proof_lem_KLD_MFVI,app:proof_MPVI_gaussian_mean}. \Cref{fig:KLD_VI_vs_MP} verifies \cref{lem:KLD_MFVI,thm:KLD_MPVI}: for the chosen $A$, the MF-VI KLD quickly approaches its non-zero limit, while the VPR KLD is already close to zero for small $n$.

\subsection{Scalability}\label{sec:scalability}

Like \cref{alg:gen_pred_resample}, \cref{alg:MPVI} is trivially parallelisable.
Once $q_{\hat{\lambda}_{n,i}}$ is obtained, sampling from $\tilde p_{\hat{\lambda}_{n,i}}(\cdot)$ is also straightforward: draw \smash{$\theta\sim q_{\hat{\lambda}_{n,i}}(\cdot)$} and then $\tilde y_{i+1}\mid\theta\sim p(\cdot\mid \theta)$.
The two main computational bottlenecks are therefore the repeated re-estimation of the variational parameters \smash{$\hat{\lambda}_{n, i+1}$} and the final computation of \smash{$\theta^{(l)}_{n,N}$}.

When the ELBO maximiser is unavailable in closed form, the ideal update in \cref{alg:MPVI} requires solving a new variational optimisation problem after every generated observation, which would be prohibitive if done to convergence at each of the $N$ resampling steps.
Instead, we approximate each update by taking a small, fixed number $S$ of minibatch stochastic-gradient steps, warm-started at the previous variational parameter \smash{$\hat{\lambda}_{n,i}$}.
This yields a fast local update from \smash{$\hat{\lambda}_{n,i}$} to \smash{$\hat{\lambda}_{n,i+1}$} and provides a scalable approximation to the ideal finite-horizon procedure.
Other approximate update rules could also be used.
We experimented with natural-gradient batch updates and BONG \citep{jones2024bayesian}, a popular single-observation online-VI method.
In preliminary experiments, natural-gradient updates gave similar posterior approximations without improving speed, while BONG yielded poorer approximations to $\pi(\theta\mid y_{1:n})$.
This may be because online-VI methods are designed for streams of genuine observations from the data-generating process, whereas the additional observations in VPR are simulated from the current predictive and contain no new information beyond $y_{1:n}$.
Consequently, approximation errors can compound along the resampling path without being offset by new signal.

Even with a cheap \smash{$\hat{\lambda}_{n,i+1}$} update, computing \smash{$\theta^{(l)}_{n,N}$} in \cref{alg:MPVI} requires solving a final maximum-likelihood problem on $\{y_{1:n}, \tilde{y}_{1:N}\}$ for each resampling path, which may be expensive.
To avoid this, we approximate \smash{$\theta^{(l)}_{n,N}$} by the terminal variational mean \smash{$m^{(l)}_{n,N}$}.
Under standard regularity conditions, this is a consistent and asymptotically normal estimator of $\theta^{(l)}_{n,N}$ \citep{ray2022variational, zhang2020convergence};
moreover, as $N$ grows, the KLD component of the ELBO becomes negligible relative to the likelihood term, so maximising the ELBO approaches maximising the log-likelihood. \Cref{alg:scale_MPVI} in \cref{app:scalable-vpr} describes the resulting scalable implementation of \cref{alg:MPVI}.

\subsection{Positioning in the literature}

We now position VPR relative to the adjacent literature. 
Although variational approximations appear throughout \cref{alg:MPVI,alg:scale_MPVI}, VPR is \emph{not} itself a new variational objective or variational family.
Rather, it is motivated by the observation that cheap variational families are often more reliable for prediction than inference.
VPR exploits this asymmetry within a predictive Bayesian framework: it uses MF-VI as a scalable predictive plug-in, but returns samples that better approximate the Bayesian posterior than the underlying variational posterior itself.

VPR sits within the predictive-resampling literature, with the closest relatives being the parametric methods of \citet{holmes_statistical_2023, fong_asymptotics_2024}.
However, contrary to existing methods, VPR targets \emph{the} Bayesian posterior \eqref{eq:bayes-posterior} induced by a specified prior--likelihood pair.
By contrast, predictive resampling was introduced as a way to elicit \emph{a posterior} directly from a sequence of prediction rules, without the need to specify a prior \citep{fong2023martingale};
while for certain classes of prediction rules the resulting distribution can be interpreted only asymptotically as \emph{a Bayesian posterior}, this is under an unknown implicit prior that is not specified by the user.
To our knowledge, VPR is the first method to use predictive resampling to obtain a scalable approximation to the Bayesian posterior under a given prior--likelihood pair, rather than a more general notion of uncertainty.

It is useful to distinguish predictive resampling from a
parallel stream of work that extends the Bayesian bootstrap in a different direction,  the \emph{posterior bootstrap}: rather than imputing future observations, one reweights observed, and optionally synthetic, data via Dirichlet weights and solves a weighted M-estimation problem for each posterior draw \citep{newton_approximate_1994, lyddon_general_2019, fong2019scalable, dellaporta2022robust}.
The resulting law does not in general coincide with the Bayesian posterior arising from a user-specified parametric prior--likelihood pair.
In particular, one application in \citet{lyddon2018nonparametric} claims a VI correction, but the correction operates by reducing VI's influence rather than improving on it: they target a misspecification-robust posterior on $\theta$, with a concentration parameter $c$ interpolating between a Bayesian bootstrap ($c=0$) and a parametric Bayesian posterior ($c \to \infty$, approximated by MF-VI when too costly); their best results occur as $c \to 0$, where VI is suppressed entirely (their Figure~1).

\section{Results}
\label{sec:Simulations}

This section investigates the performance of finite-$N$ VPR on three inference scenarios of increasing difficulty: (i) conjugate Bayesian linear regression, (ii) non-conjugate single-level logistic regression, and (iii) a hierarchical linear random effects model.
As all are regression problems, the application of VPR requires propagating the features $x$ forward as in \cref{sec:mgp}, which we do by using the bootstrap for all experiments.
Across all settings, we compare VPR to a mean-field variational baseline (MF-VI) sharing the same structure as the one used for VPR, and to a state-of-the-art MCMC reference  \citep[i.e., NUTS,][]{hoffman2014no}\footnote{For all methods, we use highly efficient JAX code that takes advantage of TPU hardware. In particular, for NUTS we employ NumPyro's state-of-the-art implementation \citep{phan2019composable}.}, which we treat as a proxy for the target posterior when the latter is unavailable.
In the conjugate setting, we include the true posterior as an additional baseline, and compute the optimal MF-VI parameters analytically, using them for both the MF-VI baseline and VPR updates.

We evaluate performance in terms of posterior quality and computational efficiency.
In simulated regimes, posterior quality is assessed via empirical marginal credible-interval coverage at nominal level $\alpha = 0.1$. On real data, where the truth is unavailable, we instead compare posterior samples to the MCMC reference via squared maximum mean discrepancy ($\text{MMD}^2$), and report negative log predictive density (NLPD) on a held-out test set as a predictive check.
Efficiency is measured as number of independent samples per second: effective sample size (ESS) for MCMC and number of paths $L$ for VPR, since independently simulated terminal paths at large $N$ can be viewed as independent draws from the limiting law approximating $\Pi_{\infty}$, so we use the number of paths $L$ as the analogous effective sample count.
Methods setup and datasets are described in \cref{app:experimental_details}.

\subsection{Conjugate model: linear regression}
\label{sec:lr_mpopt}

Because Bayesian linear regression with a Gaussian prior on the coefficients $\beta \in \mathbb{R}^d$ and known observation variance $\sigma^2$ is conjugate, it is an ideal first test case for VPR: it provides (i) an exact multivariate Gaussian posterior as ground truth and (ii) closed-form optimal MF-VI updates that can be used to run exact VPR (\cref{alg:MPVI}) without optimisation error.

In this setting, we design a data generating process that makes the inference problem non-trivial and increasingly computationally demanding for VPR and MCMC.
In particular, for increasing $d$, we generate $n = 3 \times d$ data points ensuring a badly conditioned posterior (see \cref{app:linear_data}).
For each $d$, we fix $\beta^*$ and $\sigma^2$, and report the average coverage and efficiency metrics over 100 datasets in \cref{fig:lr_cov_eff}.
MF-VI exhibits pronounced under-coverage that worsens as $d$ increases, while VPR with closed-form updates tracks the true posterior coverage closely and achieves substantially higher efficiency than MCMC.
Although obtained in an idealised setting, these results suggest that VPR can offer a better efficiency/calibration trade-off than state-of-the-art MCMC in challenging high-dimensional regimes, provided that a sufficiently efficient predictive update rule is employed.

\begin{figure}[H]
    \begin{subfigure}[t]{0.48\linewidth}
        \centering
        \includegraphics[trim= {0.0cm 0.00cm 0.0cm 0.0cm}, clip, width=\linewidth, height=4cm, keepaspectratio]{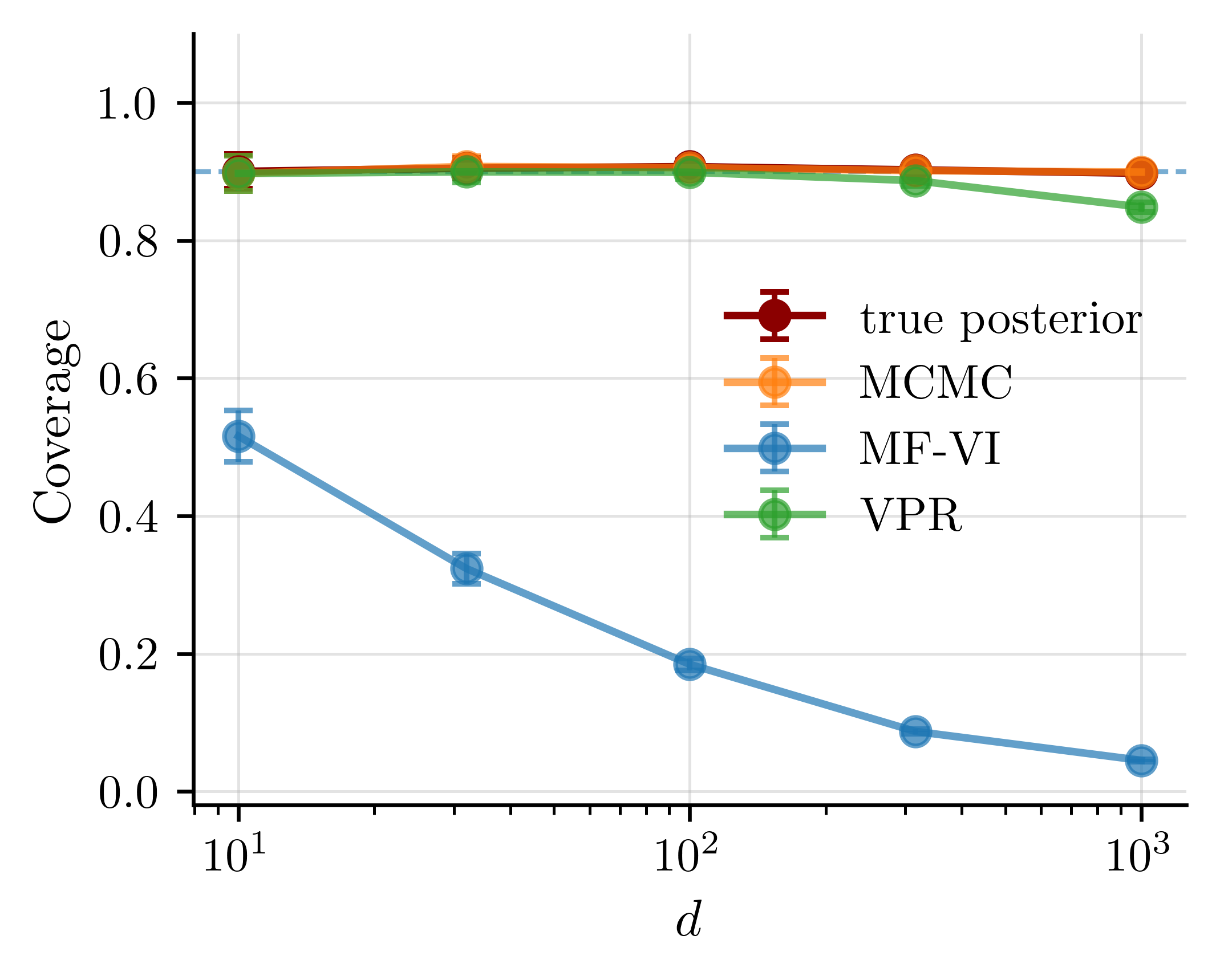}
        \label{fig:lr_coverage} 
    \end{subfigure}%
    \quad
    \begin{subfigure}[t]{0.48\linewidth}
    \centering
    \includegraphics[trim= {0.0cm 0.00cm 0.0cm 0.0cm}, clip, width=\linewidth, height=4cm, keepaspectratio]{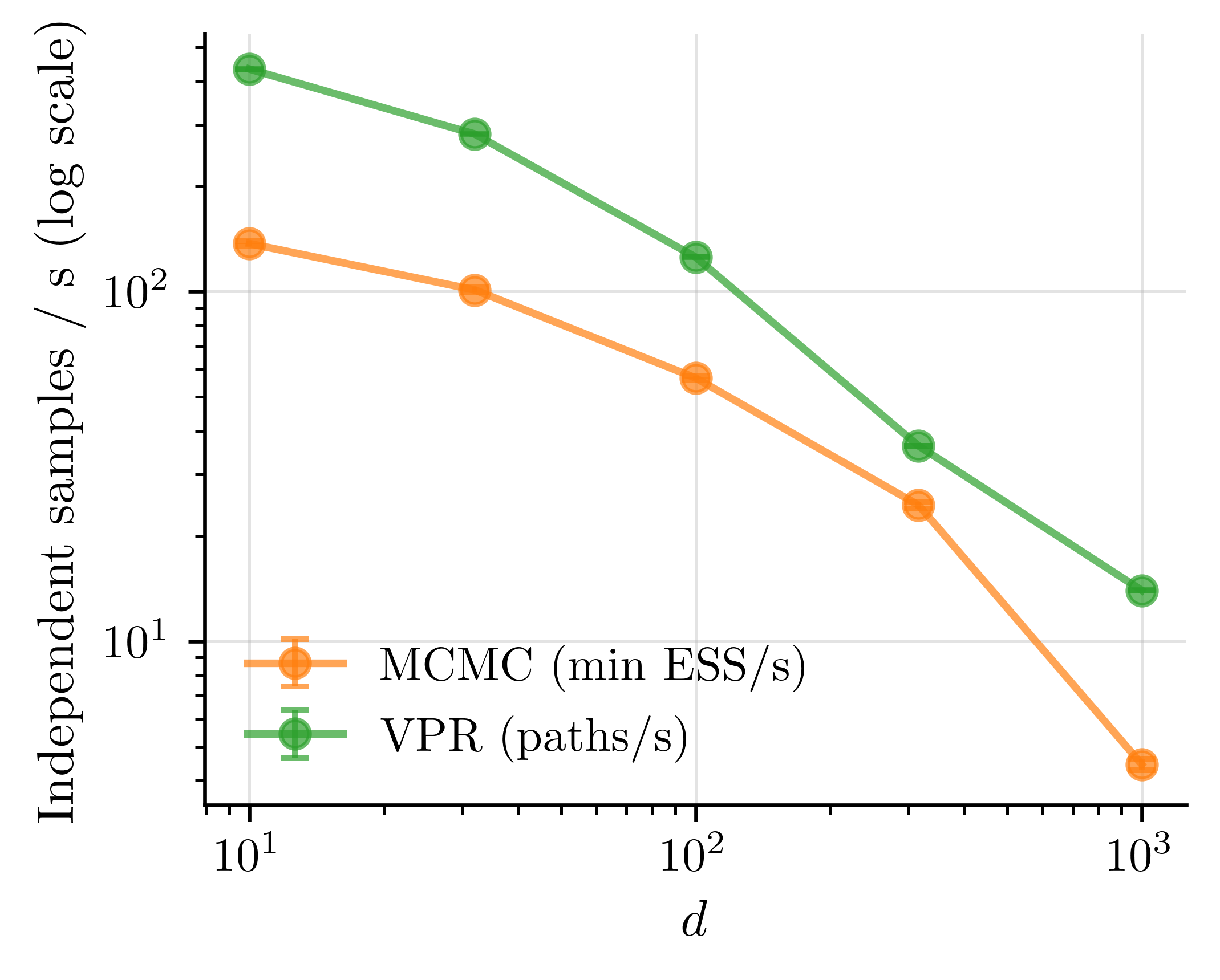}
    \label{fig:lr_efficiency}
    \end{subfigure}
    \caption{Left: Empirical coverage vs. $d$. Right: Sampling efficiency vs.~$d$ (log-scale y-axis): min ESS/s over coefficients for MCMC, paths/s for VPR.
$95\%$ CI error bars, $n = 3d$.}
\label{fig:lr_cov_eff}
\end{figure}

\subsection{Non-conjugate model: logistic regression}\label{sec:logreg_datasets}
 
\begin{table*}[tbp]
\centering
\caption{Logistic regression on real data. Mean $\pm$ 95\% CI over 100 train/test splits; $n=100$.}
\label{tab:logistic_real_data}
\begin{tabular}{lcccccc}
\toprule
\multicolumn{1}{l}{} & \multicolumn{2}{c}{NLPD ratio} & \multicolumn{2}{c}{MMD$^2$ vs MCMC} & \multicolumn{2}{c}{Speed} \\
\cmidrule(lr){2-3}
\cmidrule(lr){4-5}
\cmidrule(lr){6-7}
Dataset ($d$) & MF/MCMC & VPR/MCMC & MF-VI & VPR & ESS/s & paths/s \\
\midrule
german (20) & 1.008{\scriptsize$\pm$0.001} & 1.008{\scriptsize$\pm$0.002} & 0.012{\scriptsize$\pm$0.000} & \textbf{0.006{\scriptsize$\pm$0.000}} & 202{\scriptsize$\pm$6} & \textbf{226{\scriptsize
$\pm$1}} \\
mozilla4 (5) & 1.003{\scriptsize$\pm$0.001} & \textbf{1.001{\scriptsize$\pm$0.001}} & 0.019{\scriptsize$\pm$0.001} & \textbf{0.003{\scriptsize$\pm$0.000}} & 186{\scriptsize$\pm$6} & \textbf{247{\scriptsize$\pm$2}} \\
phoneme (5) & \textbf{1.001{\scriptsize$\pm$0.000}} & 1.002{\scriptsize$\pm$0.001} & 0.007{\scriptsize$\pm$0.001} & \textbf{0.004{\scriptsize$\pm$0.001}} & \textbf{249{\scriptsize$\pm$8}} & 246{\scriptsize$\pm$1} \\
skin (3) & 1.033{\scriptsize$\pm$0.003} & \textbf{1.006{\scriptsize$\pm$0.001}} & 0.111{\scriptsize$\pm$0.006} & \textbf{0.006{\scriptsize$\pm$0.001}} & 144{\scriptsize$\pm$3} & \textbf{250{\scriptsize$\pm$2}} \\
telescope (10) & \textbf{0.998{\scriptsize$\pm$0.001}} & 1.002{\scriptsize$\pm$0.001} & 0.045{\scriptsize$\pm$0.001} & \textbf{0.005{\scriptsize$\pm$0.000}} & 173{\scriptsize$\pm$5} & \textbf{244
{\scriptsize$\pm$2}} \\
\bottomrule
\end{tabular}
\end{table*}

Bayesian logistic regression with a multivariate Gaussian prior on the coefficients $\beta \in \mathbb{R}^d$ is a classical non-conjugate inference example where MF-VI is not available in closed form.
As a result, the ideal \cref{alg:MPVI} is approximated via minibatch gradient descent as in \cref{alg:scale_MPVI}. As a qualitative illustration, we first consider the simulated highly correlated example shown in \cref{fig:logistic_pairplot}. Despite being built from MF variational predictive updates, VPR produces posterior pair plots that closely match the MCMC reference, whereas MF-VI misses the posterior dependence structure. \Cref{app:logistic_experiments} further visualises the evolution of one variational mean across resampling paths for this example; after initial variability, the trajectories stabilise after $\sim$1000 predictive-resampling steps.

We then turn to real-data experiments. \Cref{tab:logistic_real_data} reports results on a range of standard datasets. MF-VI performs comparably to VPR and MCMC in terms of NLPD: this matches the motivation for VPR, which is aimed at settings where MF-VI predicts well, but fails to capture the dependence structure and uncertainty of the posterior. On the other hand, VPR improves substantially over MF-VI in terms of posterior quality, while remaining computationally competitive with MCMC. \Cref{app:logistic_experiments} shows that these conclusions are robust to the choice of the number of gradient steps per predictive update, and that using fewer such steps can yield additional speedups, making VPR much faster than MCMC across all datasets considered. 
\Cref{app:logistic_experiments} also focuses on one real dataset and conducts a thorough ablation study of $N,S, L$ and batch size, demonstrating general robustness of results to hyperparameters.

\subsection{Hierarchical model: linear mixed-effects}
\label{sec:random-effects}

The linear mixed random-effects (LMRE) model represents a popular hierarchical formulation for longitudinal and clustered data: a shared vector of fixed effects $\beta$ captures population-level effects, while group-specific random effects $u_g$ capture latent heterogeneity across clusters \citep{gelman2007data}. 
These latents induce posterior dependence and make exact posterior inference analytically intractable.
Concretely, for groups $g=1,\dots,G$, we consider
\[
\begin{aligned}
y_g \;=\; X_g\beta + Z_g u_g + \varepsilon_g,\quad u_g \;\sim\; \mathcal{N}(0,D), \qquad 
\varepsilon_g \sim \mathcal{N}(0,\sigma^2 I_{n_g}),
\end{aligned}
\]
where $y_g\in\mathbb{R}^{n_g}$, $X_g\in\mathbb{R}^{n_g\times d}$ and $Z_g\in\mathbb{R}^{n_g\times r}$ are known designs, $\beta\in\mathbb{R}^d$ are fixed-effect coefficients, and $u_g\in\mathbb{R}^r$ are latent random effects with covariance $D$.
We set $d=r=3$, $G=4$, fix $\sigma^2$, and use varying group sizes for a total of $n = 82$ observations (see \cref{app:lmre_data} for details).
We use a multivariate Gaussian prior for $\beta$ and an inverse-Wishart prior for $D$. 
As before, ideal VPR is approximated via \cref{alg:scale_MPVI}. 
\Cref{tab:lmre_sim_coverage_eff} summarises average coverage and efficiency metrics over $100$ repetitions for both the population-level $\beta$ and group-level $u$. In this hierarchical regime, VPR delivers uncertainty estimates that are close to the reference posterior while achieving substantially higher effective-sample efficiency.
Additional posterior diagnostics and pair plots are in \cref{app:lmre_experiments}. \looseness=-1

\begin{table}[H]
    \centering
    \caption{LMRE: $90\%$ marginal coverage, averaged over $d$ for $\beta$, over $g, d$ for $u$ and speed; $95\%$ CI.}
    \label{tab:lmre_sim_coverage_eff}
    \begin{tabular}{lccc}
\toprule
Method & $\beta$ & $u$ & Speed \\
\midrule
MF-VI & 0.633{\scriptsize$\pm$0.067} & 0.651{\scriptsize$\pm$0.034} & -- \\
VPR & \textbf{0.900{\scriptsize$\pm$0.042}} & 0.856{\scriptsize$\pm$0.025} & \textbf{134{\scriptsize$\pm$0\,paths/s}} \\
MCMC & 0.927{\scriptsize$\pm$0.038} & \textbf{0.892{\scriptsize$\pm$0.021}} & 28{\scriptsize$\pm$1\,ESS/s} \\
\bottomrule
\end{tabular}
\end{table}

\section{Discussion}
\label{sec:discussion}

VPR uses a standard MF-VI fit on the user's prior--likelihood pair as a predictive plug-in within predictive resampling.
The resulting samples approximate the exact Bayesian posterior \eqref{eq:bayes-posterior} and substantially improve on MF-VI's uncertainty quantification, often at a fraction of the cost of MCMC.
Speedups depend on the cost of the variational updates and on available parallelism, and benchmarks are made against a strong JAX/NUTS implementation under our compute budget.
Ablations  (\cref{app:logistic_experiments}) show that VPR's posterior quality is stable across reasonable ranges of the remaining hyperparameters once $N$ is moderately large, so practitioners can adopt VPR without delicate tuning. 
A central theoretical contribution is establishing that the VPR limiting distribution $\Pi_\infty$ is well-defined under explicit conditions on the variational family and update rule, mirroring the foundational existence results that underpin the predictive Bayes literature \citep{fong2023martingale, battiston2025bayesian, fortini2025exchangeability}.
We then characterise the rate at which the finite-horizon law $\Pi_N$ approaches $\Pi_\infty$.
A tractable Gaussian location model further allows us to show why VPR works: the KLD between $\Pi_\infty$ and the Bayesian posterior vanishes at rate $O(n^{-2})$, even though the MF-VI KLD does not vanish.
Extending the latter result to non-conjugate models -- quantifying how closely the variational predictive must track the Bayesian predictive for $\Pi_\infty$ to remain close to $\pi(\theta\mid y_{1:n})$ -- is a natural direction for future work. 
Empirically, our experiments target the regime motivating VPR: settings where MF-VI is predictively useful but overconfident for inference. 
Applying VPR to more complex models and multi-modal posteriors is a promising next step, which entails carefully analysing the trade-off between richer variational families and additional computational cost.

\begin{ack}
Special thanks to Shahine Bouabid for his invaluable assistance with both the coding aspects of this paper and his recommendations on clarity. We also thank Lorenzo Cappello for useful discussion. 
Laura Battaglia is supported by the Clarendon Funds Scholarship. Stefano Cortinovis is supported
by the EPSRC Centre for Doctoral Training in Modern Statistics and Statistical Machine Learning
(EP/S023151/1). 
\end{ack}

\bibliographystyle{plainnat}
\bibliography{references}

\newpage
\appendix
\setcounter{figure}{0}
\renewcommand{\thefigure}{S\arabic{figure}}
\setcounter{table}{0}
\renewcommand{\thetable}{S\arabic{table}}
\setcounter{algorithm}{0}
\renewcommand{\thealgorithm}{S\arabic{algorithm}}

\crefalias{section}{appendix}
\crefalias{subsection}{appendix}
\crefalias{subsubsection}{appendix}

\section{Proofs of theoretical contributions}\label{app:proofs}

This section contains proofs and regularity conditions for all of the results presented in \cref{sec:vpr}.

\subsection{Proof of \cref{thm:ACID_VI}} \label{app:proof_thm_ACID_VI}

\begin{proof}
    Firstly, write the total-variation distance (TVD) in terms of densities
\begin{align}
    &\TV\left(\mathbb{E}\left[\tilde{P}_{\hat{\lambda}_{n, i+1}}(\tilde{Y}_{i+2} \in \cdot\mid \tilde{Y}_{i+1})\right], \tilde{P}_{\hat{\lambda}_{n, i}}(\tilde{Y}_{i+1} \in \cdot)\right) \nonumber\\
    &= \frac{1}{2}\int_{\mathcal{Y}_2}\left|\mathbb{E}\left[\tilde{p}_{\hat{\lambda}_{n, i+1}}(\tilde{Y}_{i+2} = y\mid \tilde{Y}_{i+1})\right] - \tilde{p}_{\hat{\lambda}_{n, i}}(\tilde{Y}_{i+1} = y)\right|dy\label{Equ:TVD}
\end{align}
where
\begin{align*}
    &\tilde{p}_{\hat{\lambda}_{n, i}}(\tilde{Y}_{i+1} = y) = \int_{\Theta_1} p(\tilde{Y}_{i+1} = y;\theta_1)q_{\hat{\lambda}_{n, i}}(\theta_1)d\theta_1\\
    \mathbb{E}&\left[\tilde{p}_{\hat{\lambda}_{n, i+1}}(\tilde{Y}_{i+2} = y\mid \tilde{Y}_{i+1})\right] =\\
    &\int_{\mathcal{Y}_1}\int_{\Theta_2}p(\tilde{Y}_{i+2} = y;\theta_2)q_{\hat{\lambda}_{n, i+1}(\tilde{y}_{i+1})}(\theta_2)d\theta_2\int_{\Theta_1}p(\tilde{Y}_{i+1} = \tilde{y}_{i+1};\theta_1)q_{\hat{\lambda}_{n, i}}(\theta_1)d\theta_1d\tilde{y}_{i+1},
\end{align*}
and $\theta_1$ and $\theta_2$ are two replications of the same random variable $\theta$. We use this notation to keep track of integrating over the variational posterior twice.

Note that even after integrating $y$ and taking expectations over $\tilde{Y}_{i+1}$, \eqref{Equ:TVD} is a random variable, with dependence on random $\tilde{y}_{1:i}$ (through $\hat{\lambda}_{n, i}$) suppressed for notational convenience throughout. As a result, all equality and inequality statements that follow hold almost surely.

Now if we second order Taylor expand $q_{\hat{\lambda}_{n, i+1}(\tilde{y}_{i+1})}(\theta_2)$ about $\hat{\lambda}_{n, i}$ we get 
\begin{align*}
    q_{\hat{\lambda}_{n, i+1}(\tilde{y}_{i+1})}(\theta_2) &= q_{\hat{\lambda}_{n, i}}(\theta_2) + \nabla_{\lambda}q_{\hat{\lambda}_{n, i}}(\theta_2)^\top\left(\hat{\lambda}_{n, i+1}(\tilde{y}_{i+1}) - \hat{\lambda}_{n, i}\right)\\
    &\qquad + \frac{1}{2}\left(\hat{\lambda}_{n, i+1}(\tilde{y}_{i+1}) - \hat{\lambda}_{n, i}\right)^\top\nabla^2_{\lambda}q_{\lambda^{\dagger}(\tilde{y}_{i+1})}(\theta_2)\left(\hat{\lambda}_{n, i+1}(\tilde{y}_{i+1}) - \hat{\lambda}_{n, i}\right)
\end{align*}
where $\lambda^{\dagger}(\tilde{y}_{i+1})$ is a point between $\hat{\lambda}_{n, i}$ and $\hat{\lambda}_{n, i+1}(\tilde{y}_{i+1})$.

As a result 
\begin{align*}
    &\mathbb{E}\left[\tilde{p}_{\hat{\lambda}_{n, i+1}}(\tilde{Y}_{i+2} = y\mid \tilde{Y}_{i+1})\right]\\
    =& \int_{\mathcal{Y}_1}\int_{\Theta_2}p(\tilde{Y}_{i+2} = y;\theta_2)q_{\hat{\lambda}_{n, i+1}(\tilde{y}_{i+1})}(\theta_2)d\theta_2\int_{\Theta_1}p(\tilde{Y}_{i+1} = \tilde{y}_{i+1};\theta_1)q_{\hat{\lambda}_{n, i}}(\theta_1)d\theta_1d\tilde{y}_{i+1}\\
    =& \int_{\mathcal{Y}_1}\int_{\Theta_2}p(\tilde{Y}_{i+2} = y;\theta_2)q_{\hat{\lambda}_{n, i}}(\theta_2)d\theta_2\int_{\Theta_1}p(\tilde{Y}_{i+1} = \tilde{y}_{i+1};\theta_1)q_{\hat{\lambda}_{n, i}}(\theta_1)d\theta_1d\tilde{y}_{i+1}\\
    &+\int_{\mathcal{Y}_1}\int_{\Theta_2}p(\tilde{Y}_{i+2} = y;\theta_2)\nabla_{\lambda}q_{\hat{\lambda}_{n, i}}(\theta_2)^\top\left(\hat{\lambda}_{n, i+1}(\tilde{y}_{i+1}) - \hat{\lambda}_{n, i}\right)d\theta_2\int_{\Theta_1}p(\tilde{Y}_{i+1} = \tilde{y}_{i+1};\theta_1)q_{\hat{\lambda}_{n, i}}(\theta_1)d\theta_1d\tilde{y}_{i+1}\\
    &+\frac{1}{2}\int_{\mathcal{Y}_1}\int_{\Theta_2}p(\tilde{Y}_{i+2} = y;\theta_2)\left(\hat{\lambda}_{n, i+1}(\tilde{y}_{i+1}) - \hat{\lambda}_{n, i}\right)^\top\nabla_{\lambda}^2q_{\lambda^{\dagger}(\tilde{y}_{i+1})}(\theta_2)\left(\hat{\lambda}_{n, i+1}(\tilde{y}_{i+1}) - \hat{\lambda}_{n, i}\right)d\theta_2\\
    &\quad\cdot\int_{\Theta_1}p(\tilde{Y}_{i+1} = \tilde{y}_{i+1};\theta_1)q_{\hat{\lambda}_{n, i}}(\theta_1)d\theta_1d\tilde{y}_{i+1}\\
    =& \tilde{p}_{\hat{\lambda}_{n, i}}(\tilde{y}_{i+1} = y)\\
    &+\left(\int_{\Theta_2}p(\tilde{Y}_{i+2} = y;\theta_2)\nabla_{\lambda}q_{\hat{\lambda}_{n, i}}(\theta_2)d\theta_2\right)^\top\mathbb{E}\left[\hat{\lambda}_{n, i+1}(\tilde{y}_{i+1}) - \hat{\lambda}_{n, i}\right]\\
    &+\frac{1}{2}\int_{\mathcal{Y}_1}\int_{\Theta_2}p(\tilde{Y}_{i+2} = y;\theta_2)\left(\hat{\lambda}_{n, i+1}(\tilde{y}_{i+1}) - \hat{\lambda}_{n, i}\right)\nabla^2_{\lambda}q_{\lambda^{\dagger}(\tilde{y}_{i+1})}(\theta_2)\left(\hat{\lambda}_{n, i+1}(\tilde{y}_{i+1}) - \hat{\lambda}_{n, i}\right)d\theta_2\\
    &\quad\cdot\int_{\Theta_1}p(\tilde{Y}_{i+1} = \tilde{y}_{i+1};\theta_1)q_{\hat{\lambda}_{n, i}}(\theta_1)d\theta_1d\tilde{y}_{i+1}\\
    =& \tilde{p}_{\hat{\lambda}_{n, i}}(\tilde{y}_{i+1} = y) + A(y) + B(y),
\end{align*}
where 
\begin{align*}
    A(y) &:= \left(\int_{\Theta_2}p(\tilde{Y}_{i+2} = y;\theta_2)\nabla_{\lambda}q_{\hat{\lambda}_{n, i}}(\theta_2)d\theta_2\right)^\top\mathbb{E}\left[\hat{\lambda}_{n, i+1}(\tilde{y}_{i+1}) - \hat{\lambda}_{n, i}\right]\\
    B(y) &:= \frac{1}{2}\int_{\mathcal{Y}_1}\int_{\Theta_2}p(\tilde{Y}_{i+2} = y;\theta_2)\left(\hat{\lambda}_{n, i+1}(\tilde{y}_{i+1}) - \hat{\lambda}_{n, i}\right)\nabla^2_{\lambda}q_{\lambda^{\dagger}(\tilde{y}_{i+1})}(\theta_2)\left(\hat{\lambda}_{n, i+1}(\tilde{y}_{i+1}) - \hat{\lambda}_{n, i}\right)d\theta_2\\
    &\qquad\cdot\int_{\Theta_1}p(\tilde{Y}_{i+1} = \tilde{y}_{i+1};\theta_1)q_{\hat{\lambda}_{n, i}}(\theta_1)d\theta_1d\tilde{y}_{i+1}
\end{align*}
Therefore
\begin{align*}
\xi_i &= \frac{1}{2}\int_{\mathcal{Y}_2} \left|A(y) + B(y)\right|dy\\
&\le \frac{1}{2}\int_{\mathcal{Y}_2} \left|A(y)\right|dy + \frac{1}{2}\int_{\mathcal{Y}_2} \left|B(y)\right|dy,
\end{align*}

with
\begin{align*}
   \frac{1}{2}\int_{\mathcal{Y}_2} \left|A(y)\right|dy &= \frac{1}{2} \int_{\mathcal{Y}_2}\left|\left(\int_{\Theta_2}p(\tilde{Y}_{i+2} = y;\theta_2)\nabla_{\lambda}q_{\hat{\lambda}_{n, i}}(\theta_2)d\theta_2\right)^\top\mathbb{E}\left[\hat{\lambda}_{n, i+1}(\tilde{y}_{i+1}) - \hat{\lambda}_{n, i}\right] \right|dy\\
   &\leq \frac{1}{2} \left(\int_{\mathcal{Y}_2}\left|\int_{\Theta_2}p(\tilde{Y}_{i+2} = y;\theta_2)\nabla_{\lambda}q_{\hat{\lambda}_{n, i}}(\theta_2)d\theta_2\right|dy\right)^\top\left|\mathbb{E}\left[\hat{\lambda}_{n, i+1}(\tilde{y}_{i+1}) - \hat{\lambda}_{n, i}\right] \right|\\
   &= \frac{1}{2} \left(\int_{\mathcal{Y}_2}\left|\nabla_{\lambda}\tilde{p}_{\hat{\lambda}_{n, i}}(\tilde{Y}_{i+2} = y)\right|dy\right)^\top\left|\mathbb{E}\left[\hat{\lambda}_{n, i+1}(\tilde{y}_{i+1}) - \hat{\lambda}_{n, i}\right] \right|,
\end{align*}
assuming that it is possible to change the order of the integration over $\Theta$ and the derivative in $\lambda$,

and
\begin{align*}
   &\frac{1}{2}\int_{\mathcal{Y}_2} \left|B(y)\right|dy\\
   =& \frac{1}{4} \int_{\mathcal{Y}_2}\left|\int_{\mathcal{Y}_1}\int_{\Theta_2}p(\tilde{Y}_{i+2} = y;\theta_2)\left(\hat{\lambda}_{n, i+1}(\tilde{y}_{i+1}) - \hat{\lambda}_{n, i}\right)^\top\nabla^2_{\lambda}q_{\lambda^{\dagger}(\tilde{y}_{i+1})}(\theta_2)\left(\hat{\lambda}_{n, i+1}(\tilde{y}_{i+1}) - \hat{\lambda}_{n, i}\right)d\theta_2\right.\\
   &\qquad\left.\tilde{p}_{\hat{\lambda}_{n, i}}(\tilde{Y}_{i+1} = \tilde{y}_{i+1})d\tilde{y}_{i+1} \right|dy\\
   =& \frac{1}{4} \int_{\mathcal{Y}_2}\left|\int_{\mathcal{Y}_1}\left(\hat{\lambda}_{n, i+1}(\tilde{y}_{i+1}) - \hat{\lambda}_{n, i}\right)^\top\nabla^2_{\lambda}\tilde{p}_{\lambda^{\dagger}(\tilde{y}_{i+1})}(y)\left(\hat{\lambda}_{n, i+1}(\tilde{y}_{i+1}) - \hat{\lambda}_{n, i}\right)\tilde{p}_{\hat{\lambda}_{n, i}}(\tilde{Y}_{i+1} = \tilde{y}_{i+1})d\tilde{y}_{i+1} \right|dy\\
   \le& \frac{1}{4} \int_{\mathcal{Y}_1}\int_{\mathcal{Y}_2}\left|\kappa_{\lambda^{\dagger}(\tilde{y}_{i+1})}(y)\right|\left\|\hat{\lambda}_{n, i+1}(\tilde{y}_{i+1}) - \hat{\lambda}_{n, i}\right\|_2^2\tilde{p}_{\hat{\lambda}_{n, i}}(\tilde{Y}_{i+1} = \tilde{y}_{i+1})dyd\tilde{y}_{i+1} 
\end{align*}   
where $\kappa_{\lambda^{\dagger}(\tilde{y}_{i+1})}(y)$ is the eigenvalue of $\nabla^2_{\lambda}\tilde{p}_{\lambda^{\dagger}(\tilde{y}_{i+1})}(y)$ with the largest absolute value. We can then use the Cauchy--Schwarz after writing
\begin{align*}
   & \frac{1}{4} \int_{\mathcal{Y}_1}\int_{\mathcal{Y}_2}\left|\kappa_{\lambda^{\dagger}(\tilde{y}_{i+1})}(y)\right|\left\|\hat{\lambda}_{n, i+1}(\tilde{y}_{i+1}) - \hat{\lambda}_{n, i}\right\|_2^2\tilde{p}_{\hat{\lambda}_{n, i}}(\tilde{Y}_{i+1} = \tilde{y}_{i+1})dyd\tilde{y}_{i+1} \\
   =& \frac{1}{4} \int_{\mathcal{Y}_1}\int_{\mathcal{Y}_2}\left|\kappa_{\lambda^{\dagger}(\tilde{y}_{i+1})}(y)\right|dy\sqrt{\tilde{p}_{\hat{\lambda}_{n, i}}(\tilde{Y}_{i+1} = \tilde{y}_{i+1})}\left\|\hat{\lambda}_{n, i+1}(\tilde{y}_{i+1}) - \hat{\lambda}_{n, i}\right\|_2^2\\
   &\qquad\sqrt{\tilde{p}_{\hat{\lambda}_{n, i}}(\tilde{Y}_{i+1} = \tilde{y}_{i+1})}d\tilde{y}_{i+1} \\
   \leq& \frac{1}{4} \sqrt{\int_{\mathcal{Y}_1}\left(\int_{\mathcal{Y}_2}\left|\kappa_{\lambda^{\dagger}(\tilde{y}_{i+1})}(y)\right|dy\right)^2\tilde{p}_{\hat{\lambda}_{n, i}}(\tilde{Y}_{i+1} = \tilde{y}_{i+1})d\tilde{y}_{i+1}}\\
   &\qquad\sqrt{\int_{\mathcal{Y}_1}\left\|\hat{\lambda}_{n, i+1}(\tilde{y}_{i+1}) - \hat{\lambda}_{n, i}\right\|_2^4\tilde{p}_{\hat{\lambda}_{n, i}}(\tilde{Y}_{i+1} = \tilde{y}_{i+1})d\tilde{y}_{i+1}}\\
   =& \frac{1}{4} \sqrt{\mathbb{E}\left[\left(\int_{\mathcal{Y}_2}\left|\kappa_{\lambda^{\dagger}(\tilde{y}_{i+1})}(y)\right|dy\right)^2\right]}\sqrt{\mathbb{E}\left[\left\|\hat{\lambda}_{n, i+1}(\tilde{y}_{i+1}) - \hat{\lambda}_{n, i}\right\|_2^4\right]}.
\end{align*} 
This concludes the proof.
\end{proof}

\subsection{Proof of \cref{cor:ACID_conditions}}\label{app:proof_cor_ACID_conds}

\begin{proof}
    Following \citet{battiston2025bayesian}, $\sum_{i=1}^\infty \xi_i < \infty$ almost surely is sufficient for the existence of $\Pi_{\infty}(\theta\mid y_{1:n})$. 
    Under the conditions of \cref{cor:ACID_conditions},
\begin{align*}
    \xi_i \leq &\frac{1}{2} \left(\int_{\mathcal{Y}}\left|\nabla_{\lambda}\tilde{p}_{\hat{\lambda}_{n, i}}(y)\right|dy\right)^\top\left|\mathbb{E}\left[\hat{\lambda}_{n, i+1}(\tilde{Y}_{i+1})\right] - \hat{\lambda}_{n, i} \right|\\
    &\qquad+ \frac{1}{4} \sqrt{\mathbb{E}\left[\left(\int_{\mathcal{Y}}\left|\kappa_{\lambda^{\dagger}(\tilde{Y}_{i+1})}(y)\right|dy\right)^2\right]}\sqrt{\mathbb{E}\left[\left\|\hat{\lambda}_{n, i+1}(\tilde{Y}_{i+1}) - \hat{\lambda}_{n, i}\right\|_2^4\right]}\\
    =&O(1/i^{1+\epsilon}) + O(1/i^{1+\epsilon})\\
    =&O(1/i^{1+\epsilon}),
\end{align*}
almost surely for some $\epsilon > 0$, which implies that $\sum_{i=1}^\infty \xi_i < \infty$ almost surely.
This concludes the proof.
\end{proof}

\subsection{Verification of \cref{cor:ACID_conditions} for multivariate Gaussian location model}
\label{sec:verification_acid}

We verify the conditions stipulated by \cref{cor:ACID_conditions} for $\Pi_{\infty}(\theta\mid y_{1:n})$ to be well defined for the multivariate Gaussian location model.

Firstly, recall that for any $y_{1:n}$ and $\tilde{y}_{1:i}$ the Bayesian posterior is given by $\pi(\theta \mid  y_{1:n}, \tilde{y}_{1:i}) = \mathcal{N}(\mu_{n, i}, \Sigma_{n, i})$, with
\begin{equation*}
    \Sigma_{n, i} = (I + (n+i)B)^{-1},\qquad
    \mu_{n, i} = \Sigma_{n, i} B \biggl(\sum_{j=1}^n y_j + \sum_{j=1}^i \tilde{y}_j\biggr),
\end{equation*}
where $B = A^{-1}$, and the Gaussian MF variational parameters $\hat{\lambda}_{n, i} = \{m_{n, i}, s_{n, i}\}\in \mathbb{R}^p \times \mathbb{R}^p_{> 0}$, approximating $\pi(\theta \mid  y_{1:n}, \tilde{y}_{1:i})$, are given by
\begin{align*}
    m_{n, i} = \mu_{n, i}, \qquad s_{n, i} = \mathrm{diag}(\Sigma_{n, i}^{-1})^{-1}.
\end{align*}

For the mean parameter we have that 
\begin{align*}
    m_{n, i+1} &= \Sigma_{n, i+1} B \left(\sum_{j=1}^n y_j + \sum_{j=1}^i \tilde{y}_j + \tilde{Y}_{i+1}\right) \\
    &= \Sigma_{n, i+1} \left(\Sigma_{n, i}^{-1} m_{n, i} + B \tilde{Y}_{i+1}\right) \\
    &= \Sigma_{n, i+1} \left((\Sigma_{n, i+1}^{-1} - B) m_{n, i} + B \tilde{Y}_{i+1}\right) \\
    &= m_{n, i} + \Sigma_{n, i+1} B (\tilde{Y}_{i+1} - m_{n, i}) 
\end{align*}
where $\Sigma_{n, i+1} = O(1/i)$ and $\tilde{Y}_{i+1} \sim \mathcal{N}\left(m_{n, i}, A + S_{n, i}\right)$ and therefore $m_{n, i+1} - m_{n, i}\sim \mathcal{N}\left(0, \Sigma_{n, i+1}B\left(A + S_{n, i}\right) \Sigma_{n, i+1}B\right)$.

For the scale parameter,
\begin{align*}
    s_{n, i} = \mathrm{diag}(\Sigma_{n, i}^{-1})^{-1},
\end{align*}
with
\begin{equation*}
    \Sigma_{n, i} = (I + (n+i)B)^{-1}
\end{equation*}
and note that $\Sigma_{n, i}$ is deterministic. 
Therefore
\begin{align*}
    \{s_{n, i+1}\}_{j} - \{s_{n, i}\}_{j} 
&= \frac{1}{(I + (n+i+1)B)_{jj}} - \frac{1}{(I + (n+i)B)_{jj}}\\
    &= \frac{-B_{jj}}{(I + (n+i+1)B)_{jj}(I + (n+i)B)_{jj}}\\
    &= O\left(1/i^2\right).
\end{align*}
Further, $\Sigma_{n, i}B\left(A + S_{n, i-1}\right) \Sigma_{n, i}B = O(1/i^2)$.

As a result
\begin{align*}    \left|\mathbb{E}\left[\{s_{n, i+1}\}_{j}\right] - \{s_{n, i}\}_{j}\right| &= O\left(1/i^2\right)\\
\sqrt{\mathbb{E}\left[\left\|\{s_{n, i+1}\}_{j} - \{s_{n, i}\}_{j}\right\|_2^4\right]} &= O\left(1/i^4\right)\\
\left|\mathbb{E}\left[m_{n, i+1}\right] - m_{n, i}\right| &= 0\\
\sqrt{\mathbb{E}\left[\left\|m_{n, i+1} - m_{n, i}\right\|_2^4\right]} &= O(1/i^2)
\end{align*}
almost surely.

Further, the variational predictive density takes the closed form
\begin{equation*}
\tilde p_{\hat\lambda_{n,i}}(y) = \mathcal{N}(y;\, m_{n,i},\, A + S_{n,i}),
\qquad S_{n,i} = \mathrm{diag}(s_{n,i}).
\end{equation*}
Although $s_{n,i} = O(1/i)$ as $i\to\infty$, the predictive covariance $A + S_{n,i} = O(1)$. 

Therefore, 
\begin{align*}
\nabla_{m_j}\tilde p_\lambda(y) &= \tilde p_\lambda(y)\cdot \bigl[(A+S)^{-1}(y-m)\bigr]_j,\\
\nabla_{s_j}\tilde p_\lambda(y) &= \tilde p_\lambda(y)\cdot \tfrac{1}{2}\Bigl(\bigl[(A+S)^{-1}(y-m)\bigr]_j^{\,2} - \bigl[(A+S)^{-1}\bigr]_{jj}\Bigr).
\end{align*}
Taking absolute values and integrating with respect to $y$ gives
\begin{align*}
\int_{\mathcal{Y}}\bigl|\nabla_{m_j}\tilde p_\lambda(y)\bigr|\,dy 
&= \mathbb{E}_{Y\sim\tilde p_\lambda}\bigl|[(A+S)^{-1}(Y-m)]_j\bigr|
= \sqrt{\tfrac{2}{\pi}\bigl[(A+S)^{-1}\bigr]_{jj}} = O(1),\\
\int_{\mathcal{Y}}\bigl|\nabla_{s_j}\tilde p_\lambda(y)\bigr|\,dy
&= \tfrac{1}{2}\,\mathbb{E}_{Y\sim\tilde p_\lambda}\bigl|[(A+S)^{-1}(Y-m)]_j^{\,2} - [(A+S)^{-1}]_{jj}\bigr| = O(1).
\end{align*}

Combining with the above bounds gives
\begin{align*}
\biggl(\int_{\mathcal{Y}}\bigl|\nabla_m\tilde p_{\hat\lambda_{n,i}}(y)\bigr|\,dy\biggr)^{\!\top}\bigl|\mathbb{E}_i[\Delta m_{n,i}]\bigr| &= O(1)\cdot 0 = 0,\\
\biggl(\int_{\mathcal{Y}}\bigl|\nabla_s\tilde p_{\hat\lambda_{n,i}}(y)\bigr|\,dy\biggr)^{\!\top}\bigl|\mathbb{E}_i[\Delta s_{n,i}]\bigr| &= O(1)\cdot O(1/i^2) = O(1/i^2).
\end{align*}
Both are $O(1/i^{1+\epsilon})$ for any $\epsilon\in(0,1)$, satisfying the first condition of \cref{cor:ACID_conditions}.

A similar argument can also be used to show that $\int_{\mathcal{Y}}\left|\kappa_{\lambda^{\dagger}(\tilde{Y}_{i+1})}(y)\right|dy = O(1)$. As a result, both conditions of \cref{cor:ACID_conditions} hold with $\epsilon\in(0,1)$, and $\Pi_\infty$ is well defined for the multivariate Gaussian location model.

\subsection{Regularity conditions and discussion}\label{app:reg_conds}

\begin{assumption}\label{ass:pred_dist}
The sequence $(\tilde{y}_{1:N})_{N\ge 1}$, defined by the one-step-ahead predictive density $f(y_{i+1}\mid y_{1:n}, \tilde{y}_{1:i})$, is almost conditionally identically distributed (a.c.i.d.). Further, the sequence $(\tilde{y}_{1:N})_{N\ge 1}$ is ergodic (stationarity is implied by a.c.i.d.) conditional on $y_{1:n}$ with stationary distribution ${G}_n(\cdot)$. 
\end{assumption}
\begin{remark}
This assumption is verifiable in cases where the predictive has a closed-form, such as in the case of the simple Gaussian mean. However, in general, numerical exploration of this condition is required. 
\end{remark}
Define $\ell_{N}(\theta):=\frac{1}{N}\sum_{j=1}^{N}\log p(\tilde{y}_j;\theta)$, $\ell_n(\theta)=\frac{1}{n}\sum_{j=1}^{n}\log p({y}_j;\theta)$ and $\ell_{n,N}(\theta)=\alpha_N\ell_n(\theta)+(1-\alpha_N)\ell_{N}(\theta)$, where $\alpha_N=n/N$. Let $\mathbb{E}_{n,N}$ denote the expectation with respect to the distribution of the sequence $\tilde{y}_{1},\dots,\tilde{y}_{N}$ for $y_{1:n}$ fixed, i.e., $\tilde{f}(y_{i+1}\mid y_{1:n}, \tilde{y}_{1:i})$. Define $\ell_{n,\infty}(\theta):=\lim_{N\rightarrow\infty}\ell_{n,N}(\theta)$, and let $$\theta_{n,\infty}=\theta(\{y_{1:n},\tilde{y}_{1:\infty}\}):=\argmax_{\theta\in\Theta}\ell_{n,\infty}(\theta)$$ when these quantities exist. By the ergodic nature of the sequence being generated (\cref{ass:pred_dist}), $$\lim_{N\rightarrow\infty}\ell_{n,N}(\theta)=\ell_{n,\infty}(\theta)=\mathbb{E}_{Y\sim G_n}\log p(Y\mid\theta).$$

\begin{remark}
The estimator $\theta_{n,N} := \arg\max_{\theta\in\Theta} \ell_{n,N}(\theta)$ analysed here weights the observed block $\ell_n$ and the imputed block $\ell_N$ as $(\alpha_N, 1-\alpha_N)$ with $\alpha_N = n/N$. This differs at finite $N$ from the path-terminal estimator $\hat\theta(\{y_{1:n}, \tilde y_{1:N}\})$ used in \cref{alg:gen_pred_resample,alg:MPVI}, which weights the two blocks per-observation, i.e., as $(n/(n+N), N/(n+N))$. The two estimators share the same limit $\theta_{n,\infty}$ under the regime $n/N = o(1)$ used throughout (\cref{sec:validity}), and the asymptotic results below apply to both.
\end{remark}

\begin{assumption}\label{ass:regularity}
The following conditions are satisfied and, when required, hold almost surely in $y_1,\dots,y_n$. 
\begin{enumerate}
    \item[(i)] For any $\varepsilon>0$ and $\theta,\theta'\in\Theta$ with $\|\theta-\theta'\|\ge\varepsilon$, if $\theta\ne\theta'$, then $p(y\mid\theta)\ne p(y\mid\theta')$ for $y\in A\subseteq\mathcal{Y}$ such that $G_n(A)>0$. Further, $\mathbb{E}_{Y\sim G_n}|\log p(Y\mid\theta)|<\infty$ for all $\theta\in\Theta$, and $\mathbb{E}_{Y\sim G_n}\log p(Y\mid\theta)$ admits a unique maximum. 
    \item[(ii)] The set $\Theta\subset\mathbb{R}^d$ is compact.
    \item[(iii)] For each $\theta\in\Theta$, $\log p(y\mid\theta)$ is continuous with $G_n$-probability 1. For all $\theta,\theta'\in\Theta$, there exists $d(y)$ such that $|\log p(y\mid\theta)-\log p(y\mid\theta')|\le d(y)\|\theta-\theta'\|$, and $\mathbb{E}_{Y\sim G_n}[d(Y)^2]<\infty$.
    \item[(iv)] The function $\theta\mapsto \ell_{n,\infty}(\theta)$ is twice continuously differentiable in a neighbourhood of $\theta_{n,\infty}$, and has a non-singular second-derivative matrix at $\theta_{n,\infty}$ denoted by $V_n$. 
    \item[(v)] $\nabla_\theta \ell_{N}(\theta_{n,\infty})=O_p(1/\sqrt{N})$ as $N\rightarrow\infty$ under $G_n$. 
 \end{enumerate}
\end{assumption}
\begin{remark}
    Assumptions (i)--(v) are standard in the frequentist analysis of point estimators, and since our draws from $\Pi_N(\cdot\mid y_{1:n})$ are obtained as $$\theta_{n,N}=\theta(\{y_{1:n},\tilde{y}_{1:N}\})=\argmax_{\theta\in\Theta} \ell_{n,N}(\theta),$$ it should not be surprising that similar conditions are required here. Most notably, however, these conditions must hold almost surely for the sequence $y_{1},\dots,y_n$. This condition is non-standard and is required since the generation of data used in producing $\theta_{n,N}$ is, by construction, conditional on $y_1,\dots,y_n$.
\end{remark}

\subsection{Proof of \cref{lem:one}}\label{app:proof_theta_conv}
\begin{proof}
Along a given sequence $\tilde{y}_{1:N}$, Assumptions~\ref{ass:regularity}(i)-(ii) imply existence of $\theta_{n,\infty}$ provided that $\ell_{n,\infty}(\theta)=\lim_{N}\ell_{n,N}(\theta)$ exists, which is guaranteed under Assumptions~\ref{ass:pred_dist} and \ref{ass:regularity}(i). The a.c.i.d.~condition in \cref{ass:pred_dist} then implies that the random variable $\theta_{n,\infty}$ exists almost surely.

From compactness and continuity of $\ell_{n,N}(\theta)$, Assumptions~\ref{ass:regularity}(ii)-(iii), the maximiser of $\ell_{n,N}(\theta)$, denoted by $\theta_{n,N}$, exists uniformly in $N$. Now, under Assumptions~\ref{ass:regularity}(i)-(iii), we have that, given $y_{1:n}$,  
$
\sup_{\theta\in\Theta}|\ell_{n,N}(\theta)-\ell_{n,\infty}(\theta)|=o_p(1),
$ (see, e.g., Example 19.7 in \citealp{van2000asymptotic} for details) and this uniform convergence implies that the maximiser $\theta_{n,N}$ also converges so long as it is unique, which we will now demonstrate.  

Let $\hat\theta_{n,N_k}$ be any convergent subsequence with limit  $\theta_{n,\infty}$. Then, for any other $\theta^\star_{n,\infty}\in\Theta$, we have that, a.s. in $y_{1:n}$,
\begin{flalign*}
\ell_{n,\infty}(\theta_{n,\infty})\ge& \lim_{k\rightarrow\infty}\left\{\sup_{\theta\in\Theta}|\ell_{n,N_k}(\theta)-\ell_{n,\infty}(\theta)|+\ell_{n,N_k}(\hat\theta_{n,N_k})\right\}\\\ge& \lim_{k\rightarrow\infty}\left\{\sup_{\theta\in\Theta}|\ell_{n,N_k}(\theta)-\ell_{n,\infty}(\theta)|+\ell_{n,N_k}(\theta^\star_{n,\infty})\right\}\\=&\ell_{n,\infty}(\theta^\star_{n,\infty})+o(1). 
\end{flalign*}
Hence, $\theta_{n,\infty}$ maximises $\ell_{n,\infty}(\theta)$. However, since by \cref{ass:regularity}(ii), maximisers of $\ell_{n,N}(\theta)$ form a bounded sequence, from continuity of $\ell_{n,N}(\theta)$, each sequence must admit a convergent subsequence. Therefore, it must be that $\ell_{n,\infty}(\theta_{n,\infty})=\ell_{n,\infty}(\theta^\star_{n,\infty})$, and by identifiability (\cref{ass:regularity}(i)) we have $\theta_{n,\infty}=\theta^\star_{n,\infty}$. Hence, $\hat\theta_{n,N}\rightarrow\theta_{n,\infty}$ a.s. in $y_{1:n}$.

Now, we note that under Assumptions~\ref{ass:regularity}(iii)-(v), using \citet[Theorem~3.2.21]{van1996weak}, we have that 
\begin{flalign*}
\sqrt{N}\left(\theta_{n,N}-\theta_{n,\infty}\right)&=-V_n^{-1}\left\{\sqrt{N}\alpha_N\nabla_\theta\ell_{n}(\theta_{n,\infty})+(1-\alpha_N)\sqrt{N}\nabla_\theta\ell_{N}(\theta_{n,\infty})\right\}+o_p(1)\\&=-(1-\alpha_N)V_n^{-1}\sqrt{N}\nabla_\theta \ell_{N}(\theta_{n,\infty})+O_P(n/\sqrt{N}).
\end{flalign*}
Under \cref{ass:regularity}(v), almost surely in $y_{1:n}$, for any $\epsilon>0$, there is an $M_\epsilon>0$ large enough so that $\Pr\left\{\|V_n^{-1}\sqrt{N}\nabla_\theta \ell_{N}(\theta_{n,\infty})\|>M_\epsilon\mid y_{1:n}\right\}< \epsilon$. Hence, we can conclude that, for any $\epsilon>0$ and $M_\epsilon$ as above
$$
\Pr\left\{\|\sqrt{N}\left(\theta_{n,N}-\theta_{n,\infty}\right)\|>M_\epsilon\mid y_{1:n}\right\}<\epsilon,
$$
and the last stated result follows. 
\end{proof}

\subsection{Proof of \cref{lem:KLD_MFVI}}\label{app:proof_lem_KLD_MFVI}

\begin{proof}
    Both posterior $\pi(\theta\mid y_{1:n})$ and variational approximation $q_{\hat{\lambda}_{n, 0}}(\theta)$ are multivariate Gaussians with the same mean $\mu_{n, 0} = m_{n, 0}$, and so the KLD reduces to
\begin{align*}
    \KL[q_{\hat{\lambda}_{n, 0}}(\theta) \parallel  \pi(\theta \mid  y_{1:n})] &= \frac{1}{2} \left(\mathrm{tr}(\Sigma_{n, 0}^{-1} S_{n, 0}) - d - \log\left(\frac{|\Sigma_{n, 0}^{-1}|}{|S_{n, 0}^{-1}|}\right)\right) \\
    &= \frac{1}{2}\left(\sum_{i} (\Sigma_{n, 0}^{-1})_{ii} (S_{n, 0})_{ii} - d - \log\left(\frac{|I + n B|}{\prod_i (1 + nB_{ii})}\right)\right) \\
    &= -\frac{1}{2} \log \left(\frac{\left|\frac{1}{n} I + B\right|}{\prod_i \left(\frac{1}{n} + B_{ii}\right)}\right).
\end{align*}
Hence, if $B$ is not diagonal $\KL[q_{\hat{\lambda}_{n, 0}}(\theta) \parallel  \pi(\theta \mid  y_{1:n})]>0$. Furthermore, as $B$ is fixed independent of $n$,
\begin{equation*}
    \lim_{n \to \infty} \KL[q_{\hat{\lambda}_{n, 0}}(\theta) \parallel  \pi(\theta \mid  y_{1:n})] = -\frac{1}{2} \log \left(\frac{|B|}{\prod_i B_{ii}}\right) > 0;
\end{equation*}
therefore when $B$ is not diagonal, the KLD does not vanish even as $n \to \infty$.
\end{proof}

\subsection{Proof of \cref{thm:KLD_MPVI}}
\label{app:proof_MPVI_gaussian_mean}

\begin{proof}

Firstly, since the MLE of a multivariate Gaussian location model is the sample mean vector of the observations, in the multivariate Gaussian location model we have that $\lim_{N\rightarrow\infty} m_{n, N} = \theta(y_{1:n}, \tilde{y}_{1:\infty})$. Therefore, we can analyse $\pi_{\infty}(\theta\mid y_{1:n})$ by investigating the asymptotic distribution of $m_{n, N}$.

\paragraph{Asymptotic distribution of $m_{n, N}$.}

Notice that the sequence of posterior covariances $(\Sigma_{n, N})_{N \geq 0}$, and hence also that of variational covariances $(S_{n, N})_{N \geq 0}$, are deterministic and converge to $0$ as $N \to \infty$.
Let $C_{n, N} = A + S_{n, N}$ and $K_{n, N} = \Sigma_{n, N} B$, which are symmetric.
The sequence of variational means $(m_{n, N})_{N \geq 0}$ satisfies the recursion
\begin{align*}
    m_{n, N} &= \Sigma_{n, N} B \left(\sum_{j=1}^n y_j + \sum_{j=1}^{N-1} \tilde{y}_j + \tilde{Y}_N\right) \\
    &= \Sigma_{n, N} \left(\Sigma_{n, N-1}^{-1} m_{n, N-1} + B \tilde{Y}_N\right) \\
    &= \Sigma_{n, N} \left((\Sigma_{n, N}^{-1} - B) m_{n, N-1} + B \tilde{Y}_N\right) \\
    &= m_{n, N-1} + \Sigma_{n, N} B (\tilde{Y}_N - m_{n, N-1}) \\
    &= m_{n, N-1} + K_{n, N} (\tilde{Y}_N - m_{n, N-1}),
\end{align*}
where
\begin{equation*}
    \tilde{Y}_N - m_{n, N-1} \sim \mathcal{N}(0, A + S_{n, N-1})
\end{equation*}
is independent of $\{y_{1:n}, \tilde{y}_{1:N-1}\}$. 
This demonstrates that the MF variational mean sequence is a martingale.
Similarly, the increments
\begin{equation*}
    \Delta m_{n, N} = m_{n, N} - m_{n, N-1} \sim \mathcal{N}\left(0, K_{n, N} C_{n, N-1} K_{n, N}\right),
\end{equation*}
are also independent of $\{y_{1:n}, \tilde{y}_{1:N-1}\}$.
Hence, we have that
\begin{equation*}
    m_{n, N} \mid  y_{1:n}
    = m_{n, 0} + \sum_{j=1}^N \Delta m_{n,j} \sim \mathcal{N}\left(\mu_{n, 0}, V_{n, N}\right),
\end{equation*}
where $V_{n, N} = \sum_{j=1}^N M_{n,j}$ with $M_{n, N} = K_{n, N} C_{n, N-1} K_{n, N}$.
Each $M_{n, N}$ is symmetric PSD, so that $V_{n, N}$ is increasing in Loewner order.
We now show that the series $V_{n, N}$ converges.
Expanding $M_{n, N}$, we have that
\begin{align*}
    M_{n, N} &= K_{n, N} C_{n, N-1} K_{n, N} \\
    &= \Sigma_{n, N} B (A + S_{n, N-1}) B \Sigma_{n, N} \\
    &= \Sigma_{n, N} B \Sigma_{n, N} + K_{n, N} S_{n, N-1} K_{n, N}.
\end{align*}
Diagonalise $B$ as $B = U \mathrm{diag}(\lambda_i(B)) U^T$, where $U$ is orthogonal and $\lambda_i(B) > 0$ are the eigenvalues of $B$.
Then,
\begin{align*}
    \Sigma_{n, N} B \Sigma_{n, N} &= (I + (n+N) B)^{-1} B (I + (n+N) B)^{-1} \\
    &= U (I + (n+N) \mathrm{diag}(\lambda_i(B)))^{-1} \mathrm{diag}(\lambda_i(B)) (I + (n+N) \mathrm{diag}(\lambda_i(B)))^{-1} U^T \\
    &= U \mathrm{diag}\left(\frac{\lambda_i(B)}{(1 + (n+N) \lambda_i(B))^2}\right) U^T \\
    &= O\left((n+N)^{-2}\right)
\end{align*}
in spectral norm, as its eigenvalues satisfy
\begin{equation*}
    \frac{\lambda_i(B)}{(1 + (n+N) \lambda_i(B))^2} \leq \frac{1}{(n+N)^2 \lambda_\text{min}(B)}.
\end{equation*}
Furthermore,
\begin{align*}
    \lambda_\text{max}(K_{n, N} S_{n, N-1} K_{n, N}) &= \lambda_\text{max}(\Sigma_{n, N} B \mathrm{diag}(\Sigma_{n, N-1}^{-1})^{-1} B \Sigma_{n, N}) \\
    &\leq \lambda_\text{max}(\Sigma_{n, N} B)^{2} \lambda_\text{max}(\mathrm{diag}(I + (n+N-1) B)^{-1}) \\
    &= \left(\max_i\left\{\frac{\lambda_i(B)}{1 + (n+N)\lambda_i(B)}\right\}\right)^2 \max_i \left\{(1 + (n+N-1) B_{ii})^{-1}\right\} \\
    &\leq \frac{1}{(n+N)^2} \cdot \frac{1}{1 + (n+N-1) \lambda_\text{min}(B)},
\end{align*}
which implies that
\begin{equation*}
    K_{n, N} S_{n, N-1} K_{n, N} = O\left((n+N)^{-3}\right)
\end{equation*}
in spectral norm.
Combining the above, we have that
\begin{equation*}
    M_{n, N} = O\left((n+N)^{-2}\right)
\end{equation*}
in spectral norm, and the series converges absolutely to $V_{n,\infty} = \sum_{j=1}^\infty M_{n,j}$.

Moreover, $m_{n, N} \mid  y_{1:n}$ is square-integrable, since
\begin{equation*}
    \sup_N \mathbb{E}\left[\|m_{n, N}\|^2 \mid y_{1:n}\right] = \|\mu_{n, 0}\|^2 + \mathrm{tr}(V_{n,\infty}) < \infty.
\end{equation*}
Hence, the sequence $(m_{n, N})_{N \geq 0}$ is a square-integrable martingale with respect to the filtration $(\mathcal{F}_{n, N})_{N \geq 0}$, and converges almost surely and in $L^2$ to $m_{n,\infty}$ as $N \to \infty$ by the martingale convergence theorem, where
\begin{equation*}
    m_{n,\infty} \mid  \mathcal{F}_{n, 0} \sim \mathcal{N}(\mu_{n, 0}, V_{n,\infty}).
\end{equation*}

\paragraph{Comparison to true posterior}
Notice that the deterministic decrement of the posterior covariance can be written as
\begin{align*}
    \Sigma_{n, N-1} - \Sigma_{n, N} &= \Sigma_{n, N-1} - (I + (n+N) B)^{-1} \\
    &= \Sigma_{n, N-1} - \left(\Sigma_{n, N-1}^{-1} + B\right)^{-1} \\
    &= \Sigma_{n, N-1} (B^{-1} + \Sigma_{n, N-1})^{-1} \Sigma_{n, N-1} \\
    &= \Sigma_{n, N-1} (I + B\Sigma_{n, N-1})^{-1} B \Sigma_{n, N-1} \\
    &= \Sigma_{n, N} B \Sigma_{n, N-1} \\
    &= K_{n, N} \Sigma_{n, N-1} (K_{n, N})^{-1} K_{n, N} \\
    &= K_{n, N} \Sigma_{n, N-1} (\Sigma_{n, N-1}^{-1} + B) A K_{n, N} \\
    &= K_{n, N} (A + \Sigma_{n, N-1}) K_{n, N}.
\end{align*}
Adding and subtracting $\Sigma_{n, N-1} - \Sigma_{n, N}$ in $M_{n, N}$, we have that
\begin{align*}
    M_{n, N} &= (\Sigma_{n, N-1} - \Sigma_{n, N}) + K_{n, N} (C_{n, N-1} - A - \Sigma_{n, N-1}) K_{n, N} \\
    &= (\Sigma_{n, N-1} - \Sigma_{n, N}) + K_{n, N} (S_{n, N-1} - \Sigma_{n, N-1}) K_{n, N} \\
    &= (\Sigma_{n, N-1} - \Sigma_{n, N}) + R_{n, N},
\end{align*}
where $R_{n, N}$ represents a residual term.

Notice that $S_{n, N-1} - \Sigma_{n, N-1}$ is not necessarily PSD.
As seen above, $K_{n, N} S_{n, N-1} K_{n, N} = O((n+N)^{-3})$.
Similarly, we have that
\begin{align*}
    \lambda_\text{max}(K_{n, N} \Sigma_{n, N-1} K_{n, N}) &= \lambda_\text{max}(\Sigma_{n, N} B \Sigma_{n, N-1} B \Sigma_{n, N}) \\
    &\leq \lambda_\text{max}(\Sigma_{n, N} B)^{2} \lambda_\text{max}((I + (n+N-1) B)^{-1}) \\
    &= \left(\max_i\left\{\frac{\lambda_i(B)}{1 + (n+N)\lambda_i(B)}\right\}\right)^2 \max_i \left\{(1 + (n+N-1) \lambda_i(B))^{-1}\right\} \\
    &\leq \frac{1}{(n+N)^2} \cdot \frac{1}{1 + (n+N-1) \lambda_\text{min}(B)},
\end{align*}
which also implies that $K_{n, N} \Sigma_{n, N-1} K_{n, N} = O\left((n+N)^{-3}\right)$ in spectral norm.
Overall, we have that
\begin{equation*}
    R_{n, N} = O\left((n+N)^{-3}\right)
\end{equation*}
in spectral norm.

Then, the covariance matrix at step $N$ can be expressed as
\begin{align*}
    V_{n, N} &= \sum_{j=1}^N M_{n,j} \\
    &= \sum_{j=1}^N (\Sigma_{n,j-1} - \Sigma_{n,j}) + \sum_{j=1}^N R_{n,j} \\
    &= \Sigma_{n, 0} - \Sigma_{n, N} + \sum_{j=1}^N R_{n,j} \\
    &= \Sigma_{n, 0} - \Sigma_{n, N} + E_{n, N},
\end{align*}
and at convergence we have
\begin{align*}
    V_{n, \infty} &= \Sigma_{n, 0} + \sum_{j=1}^\infty R_{n,j} \\
    &= \Sigma_{n, 0} + E_{n, \infty},
\end{align*}
where $E_{n, \infty} = O(n^{-2})$ in spectral norm.

\paragraph{Evaluating the KLD}

Then both the Bayesian posterior \eqref{eq:bayes-posterior} and the VPR posterior $\pi_{\infty}(\theta\mid y_{1:n})$ are multivariate Gaussians with the same mean $\mu_{n, 0} = m_{n, 0}$, but different covariance matrices, and so the KLD reduces to
\begin{align*}
    \KL[\pi_{\infty}(\cdot\mid y_{1:n}) \parallel  \pi(\cdot \mid  y_{1:n})] &= \frac{1}{2} \left(\mathrm{tr}(\Sigma_{n, 0}^{-1} V_{n,\infty}) - d - \log\left(\frac{|\Sigma_{n, 0}^{-1}|}{|V_{n,\infty}^{-1}|}\right)\right) \\
    &= \frac{1}{2} \left(\mathrm{tr}(I + \Sigma_{n, 0}^{-1} E_{n,\infty}) - d - \log\left(\frac{|\Sigma_{n, 0}^{-1}|}{|(\Sigma_{n, 0} + E_{n,\infty})^{-1}|}\right)\right) \\
    &= \frac{1}{2} \left(\mathrm{tr}(\Sigma_{n, 0}^{-1} E_{n,\infty}) - \log\left(|I + \Sigma_{n, 0}^{-1} E_{n,\infty}|\right)\right) \geq 0,
\end{align*}
with equality if and only if $E_{n,\infty} = 0$, which happens only when $B$ is diagonal.
However, using the fact that $E_{n,\infty} = O(n^{-2})$ in spectral norm and $\Sigma_{n, 0}^{-1} = I + n B = O(n)$ in the spectral norm, we conclude that
\begin{equation*}
    \lim_{n \to \infty}\KL[\pi_{\infty}(\cdot\mid y_{1:n}) \parallel  \pi(\cdot \mid  y_{1:n})] =0
\end{equation*}
at rate $O(n^{-2})$, completing the proof.
\end{proof}

The left-hand panel of \cref{fig:KLD_VI_vs_MP} presents a jellyfish plot of trajectories for $\{m_{n,i}^{(l)}\}_1$, the MF location parameter associated with the first dimension of $\theta$, across paths $l=1,\ldots,L$ as $i$ increases from $0$ to $N=2000$ for the Gaussian location model with $n=10$ observations.
After about $1500$ forward simulations, the trajectories appear to have stabilised.
The right-hand panels of \cref{fig:KLD_VI_vs_MP} plot $\KL[q_{\hat{\lambda}_{n,0}}(\theta)\parallel \pi(\cdot\mid y_{1:n})]$ and $\KL[\pi_{\infty}(\theta\mid y_{1:n})\parallel \pi(\cdot\mid y_{1:n})]$ as $n$ increases.
These figures illustrate that the MF-VI KL divergence approaches a non-zero limit as $n$ grows, whereas the VPR KL divergence decreases rapidly with $n$ and is much smaller even for small $n$.

\begin{figure}[!ht]
\centering
\includegraphics[width=1.0\textwidth]{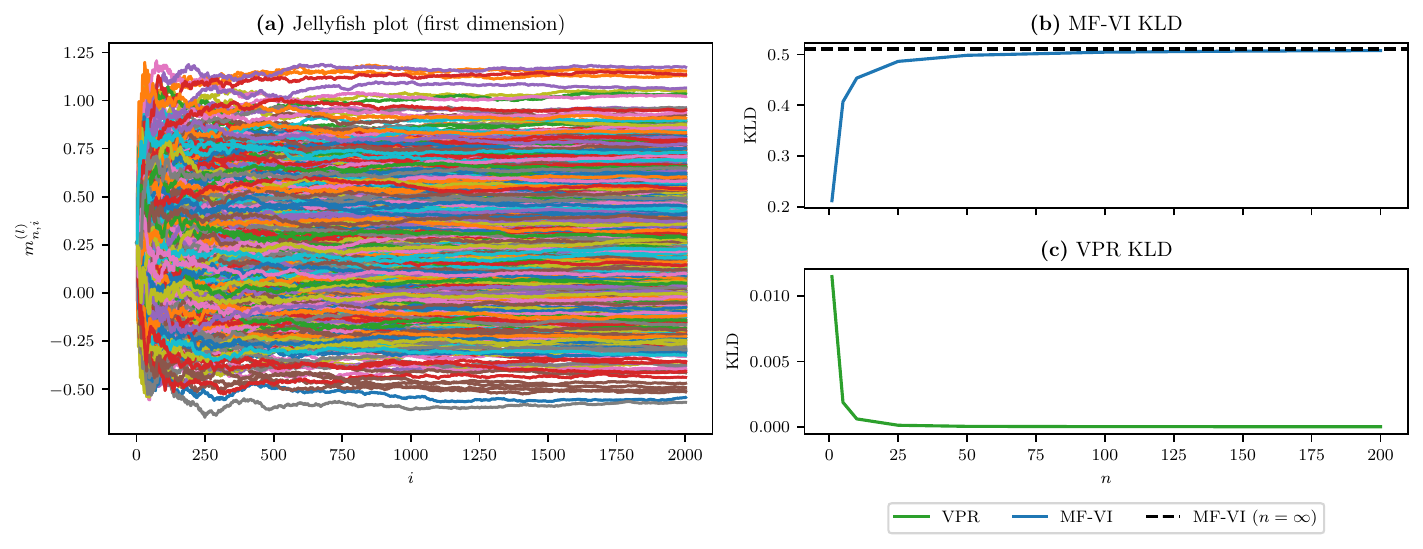}
\caption{(a): Jellyfish plot of $L = 1000$ trajectories for the first dimension of the variational mean, $\{m_{n, i}^{(l)}\}_1$, when $n = 10$; (b)/(c): KLDs of the optimal variational posterior and the VPR posterior (\cref{alg:MPVI}), respectively, to the Bayesian posterior as a function of $n$.}
\label{fig:KLD_VI_vs_MP}
\end{figure}

\section{Scalable variational predictive resampling}
\label{app:scalable-vpr}

\Cref{alg:MPVI} in \cref{sec:vpr} describes the ideal VPR procedure: at every resampling step the variational parameters are re-optimised to convergence on the augmented dataset, and the parameter associated with each path is obtained by solving a final maximum-likelihood problem on $\{y_{1:n}, \tilde y_{1:N}^{(l)}\}$. Both operations are prohibitive when repeated $N$ times across $L$ paths, especially in non-conjugate models where the ELBO maximiser is unavailable in closed form. As discussed in \cref{sec:scalability}, we therefore replace each step with a cheap approximation:
\begin{enumerate}
    \item[(i)] we approximate the variational update $\hat\lambda_{n,i+1} = \arg\max_\lambda \mathcal{L}(\lambda; y_{1:n}, \tilde y_{1:i+1})$ by $S$ minibatch stochastic-gradient steps warm-started at $\hat\lambda_{n,i}$;
    \item[(ii)] we approximate the path-terminal estimator $\theta_{n,N}^{(l)} = \hat\theta(\{y_{1:n}, \tilde y_{1:N}^{(l)}\})$ by the terminal variational mean $m_{n,N}^{(l)}$, which under standard regularity conditions is a consistent and asymptotically normal estimator of $\theta_{n,N}^{(l)}$ \citep{ray2022variational, zhang2020convergence}.
\end{enumerate}
The resulting scheme is given in \cref{alg:scale_MPVI}. It takes as additional inputs the number of gradient steps per resampling step $S$, the stepsize $\eta$, and the minibatch size $b$.

\begin{algorithm}
\caption{Scalable variational predictive resampling}
\begin{algorithmic}[1]
\STATE {\bfseries Input:} Depth $N$, $L$ samples, data $y_{1:n}$, gradient steps $S$, stepsize $\eta$, batch size $b$.
\STATE Compute $\hat{\lambda}_{n, 0} \gets \argmax_\lambda \mathcal L(\lambda; y_{1:n})$
\FOR{$l = 1$ {\bfseries to} $L$}
    \STATE Set $\hat{\lambda}^{(l)}_{n, 0} \gets \hat{\lambda}_{n, 0}$
    \FOR{$i =0$ {\bfseries to} $N - 1$}
        \STATE Sample $\tilde{y}^{(l)}_{i+1} \sim \tilde{p}_{\hat{\lambda}_{n, i}^{(l)}}(\cdot)$ and set $\hat{\lambda}^{(l)}_{n, i + 1} \gets \hat{\lambda}^{(l)}_{n, i}$
        \FOR{$s =1$ {\bfseries to} $S$}
        \STATE Draw minibatch $\mathcal{B}_{i,s} \subset \{y_{1:n}, \tilde{y}^{(l)}_{1:i+1}\}$ of size $b$
        \STATE Set  $\hat{\lambda}^{(l)}_{n,i+1} \gets \hat{\lambda}^{(l)}_{n,i+1} + \eta\,\nabla_\lambda \mathcal L(\hat{\lambda}^{(l)}_{n,i+1}; \mathcal{B}_{i,s})$
        \ENDFOR
    \ENDFOR
    \STATE Form $\theta_{n, N}^{(l)} \gets m^{(l)}_{n, N}$ 
\ENDFOR
\STATE {\bfseries Return:} $\{\theta_{n, N}^{(l)}\}_{l=1}^L$, approximating samples from $\pi(\theta\mid y_{1:n})$.
\end{algorithmic}
\label{alg:scale_MPVI}
\end{algorithm}

\section{Experimental details}\label{app:experimental_details}
\subsection{Implementation}

All experiments are implemented in JAX \citep{jax2018github} for automatic differentiation, vectorisation, and compilation via \texttt{jit}. JAX is an open-source numerical computing library that extends NumPy functionality with automatic differentiation and GPU/TPU support, designed for high-performance machine learning research. Model and variational-posterior components are written using Flax \citep{flax2020github}, and distributional primitives and sampling utilities use Distrax \citep{deepmind2020jax}. Randomness is controlled throughout with explicit JAX PRNG keys (a fixed base seed with replicate-specific offsets) to ensure reproducibility across simulated datasets and inference runs. Variational objectives and predictive-resampling updates are optimised with Adam \citep{kingma2014adam} using full-data or minibatch gradients as specified per experiment. Reference MCMC posterior samples are obtained with NumPyro \citep{phan2019composable} using the NUTS sampler \citep{hoffman2014no}: a modern, JAX-native, state-of-the-art gradient-based MCMC implementation that benefits from the same compilation and accelerator support as the rest of our pipeline; we run an initial warmup and retained-draw budget and extend sampling until a minimum effective sample size threshold is met, monitoring standard diagnostics (ESS and $\hat R$). All experiments were run on a Google Cloud Virtual Machine equipped with a v2 TPU with 8 cores, kindly provided by Google at no cost through the TPU Research Cloud (TRC) program.

\subsection{Data}\label{app:data}
\subsubsection{Linear regression}\label{app:linear_data}

We generate synthetic datasets from a Gaussian linear regression model with controlled growth in dimension and posterior anisotropy. We use 100 independent datasets for each $(n,d)$, except for $d=1000$, where we use $50$.
For each dimension $d$, we set the sample size to scale linearly as $n = 3d$. We then construct a design matrix $X \in \mathbb{R}^{n \times d}$ whose Gram matrix has a prescribed spectrum: we choose eigenvalues
\[
\lambda_1,\ldots,\lambda_d \in \bigl[\kappa(d)^{-1/2},\, \kappa(d)^{1/2}\bigr]
\]
evenly spaced on the log scale (so the geometric mean is $1$), implying $\mathrm{cond}(X^\top X)=\kappa(d)$. We realise this by drawing $U\in\mathbb{R}^{n\times d}$ with orthonormal columns and $V\in\mathbb{R}^{d\times d}$ orthogonal (random rotation), and setting
\[
X = U \,\mathrm{diag}\!\bigl(\sqrt{\lambda_1},\ldots,\sqrt{\lambda_d}\bigr)\, V^\top .
\]
The target condition number increases with dimension according to the schedule
\[
\kappa(d) \;=\; \min\!\left\{\kappa_{\max},\; \kappa_{\mathrm{ref}}\left(\frac{d}{d_{\mathrm{ref}}}\right)^{\alpha}\right\},
\qquad
\kappa_{\mathrm{ref}}=350,\;\; d_{\mathrm{ref}}=20,\;\; \alpha=1.5,\;\; \kappa_{\max}=10^{12},
\]
so posterior conditioning worsens as $d$ grows.

Given $X$, we fix a noise variance $\sigma^2$ and a ground-truth coefficient vector $\beta^\star\in\mathbb{R}^d$ obtained by drawing $\beta^\star \sim \mathcal{N}(0, I_d)$ and rescaling it so that the signal variance $\mathrm{Var}(X\beta^\star)$ matches a fixed target (held constant across $d$). Responses are generated as
\[
y = X\beta^\star + \varepsilon, \qquad \varepsilon \sim \mathcal{N}(0, \sigma^2 I_n).
\]
For each $(n,d)$ configuration we hold $(\beta^\star,\sigma^2)$ fixed and generate 100 independent datasets by resampling both $X$ (with the same spectral schedule) and the noise $\varepsilon$.

\subsubsection{Logistic regression}\label{app:logistic_data}

\paragraph{Highly-correlated feature simulation.}
We simulate binary outcomes $y\in\{0,1\}^n$ from a correlated-design logistic regression model. Concretely, we generate a design matrix $X\in\mathbb{R}^{n\times d}$ with rows $x_i^\top$ drawn i.i.d. as $x_i \sim \mathcal{N}(0,\Sigma)$, where $\Sigma\in\mathbb{R}^{d\times d}$ has Toeplitz form $\Sigma_{jk}=\rho^{|j-k|}$ to induce strong feature correlations; we set $n=15$, $d=5$, and $\rho=0.99$. We then draw a ground-truth coefficient vector $\beta^\star\in\mathbb{R}^d$ via $\beta^\star=z/\sqrt{d}$ with $z\sim\mathcal{N}(0,I_d)$, and sample labels independently as $y_i \mid x_i,\beta^\star \sim \mathrm{Bernoulli}(p_i)$ with $p_i=\sigma(x_i^\top\beta^\star)$ and $\sigma(t)=1/(1+e^{-t})$. The simulated dataset consists of the realised pair $(X,y)$ together with the generating coefficients $\beta^\star$.

\paragraph{Real datasets.}
We evaluate the non-conjugate logistic-regression experiment of \cref{sec:logreg_datasets} on five publicly available binary-classification benchmarks of varying dimension: \texttt{german}, \texttt{mozilla4}, \texttt{phoneme}, \texttt{skin}, and \texttt{telescope}. Datasets are obtained from the UCI Machine Learning Repository \citep{asuncion2007uci} (\texttt{german}, \texttt{telescope}) and OpenML \citep{vanschoren2014openml} (\texttt{mozilla4}, \texttt{phoneme}, \texttt{skin}). \cref{tab:logreg_datasets} summarises their characteristics. For each dataset and each of 100 random train/test splits, we draw a training set of size $n_{\mathrm{tr}} = 100$ uniformly at random, and use the remaining samples as a held-out test set. All continuous features are standardised to zero mean and unit variance using statistics computed on the training split.

\begin{table}[h]
\centering
\small
\caption{Characteristics of the logistic-regression benchmarks. $d$ is the number of features after standardisation; $n_{\mathrm{tr}}$ and $n_{\mathrm{te}}$ are the training- and test-set sizes used in our setup.}
\label{tab:logreg_datasets}
\begin{tabular}{lrrrl}
\toprule
Dataset & $d$ & $n_{\mathrm{tr}}$ & $n_{\mathrm{te}}$ & Source / domain \\
\midrule
\texttt{german}     & 20 & 100 & 900     & UCI; statlog credit-risk classification \\
\texttt{mozilla4}   &  5 & 100 & 15{,}445  & OpenML; defect prediction in the Mozilla codebase \\
\texttt{phoneme}    &  5 & 100 & 5{,}304   & OpenML; nasal vs.\ oral phoneme classification \\
\texttt{skin}       &  3 & 100 & 244{,}957 & OpenML; skin-segmentation pixel classification \\
\texttt{telescope}  & 10 & 100 & 18{,}920  & UCI; gamma-ray vs.\ hadron shower discrimination \\
\bottomrule
\end{tabular}
\end{table}

\subsubsection{Linear mixed-effects}\label{app:lmre_data}
We simulate clustered responses from a linear mixed random-effects model with group-specific latent effects. For groups $g=1,\dots,G$, we generate
\[
y_g = X_g \beta + Z_g u_g + \varepsilon_g,\qquad
u_g \sim \mathcal{N}(0, D),\qquad
\varepsilon_g \sim \mathcal{N}(0,\sigma^2 I_{n_g}),
\]
where $\beta\in\mathbb{R}^d$ are population-level (fixed-effect) coefficients shared across groups and $u_g\in\mathbb{R}^r$ are group-specific random effects capturing latent heterogeneity. We set $d=r=3$, $G=4$, group sizes $(n_g)=(15,25,12,30)$ (total $n=82$), and fix $\sigma^2=1$. For each dataset, we treat the design matrices $X_g\in\mathbb{R}^{n_g\times d}$ and $Z_g\in\mathbb{R}^{n_g\times r}$ as observed covariates (generated once per replicate according to the simulation design and then conditioned upon). We draw $\beta$ from a mean-zero Gaussian with a correlated covariance $\Sigma_\beta$, fix a correlated positive definite matrix $D$ for the random-effects covariance, sample $u_g \mid D \stackrel{\text{iid}}{\sim}\mathcal{N}(0,D)$, and then sample $\varepsilon_g$ independently to form $y_g$. We repeat this procedure for 100 independent simulated datasets.

\subsection{Inference set-up}\label{app:inference_setup}
\subsubsection{Linear regression}

We fit Bayesian linear regression on the simulated design--response pair $(X,y)$ described above, with Gaussian likelihood
\[
y \mid \beta \sim \mathcal{N}(X\beta,\sigma^2 I_n),
\]
and a spherical Gaussian prior
\[
\beta \sim \mathcal{N}(0,\tau_\beta^2 I_d),\qquad \tau_\beta=1000.
\]
Under this conjugate model, the exact posterior is Gaussian, $\beta\mid (X,y)\sim\mathcal{N}(m_n,S_n)$, with
\[
S_n^{-1} \;=\; \frac{1}{\sigma^2}X^\top X \;+\; \frac{1}{\tau_\beta^2}I_d,
\qquad
m_n \;=\; S_n\Bigl(\frac{1}{\sigma^2}X^\top y\Bigr).
\]

By taking a large prior scale, the conditioning of the posterior precision, and hence of $S_n$, is dominated by the conditioning of $X^\top X$.

\paragraph{MF-VI.}
We use a diagonal-covariance Gaussian variational family $q(\beta)=\prod_{j=1}^d \mathcal{N}(\beta_j\mid m_j,s_j^2)$. In this conjugate setting, the ELBO-optimal mean-field solution is available in closed form: the variational mean matches the exact posterior mean, $m_{\mathrm{MF}}=m_n$, and the variational covariance is the inverse diagonal of the exact posterior precision,
\[
S_{\mathrm{MF}}=\mathrm{diag}\!\bigl((S_n^{-1})_{11}^{-1},\ldots,(S_n^{-1})_{dd}^{-1}\bigr),
\]
so that $q(\beta)=\mathcal{N}(m_n,S_{\mathrm{MF}})$. We use this closed-form MF-VI solution to initialise VPR.

\paragraph{VPR.}
We run VPR with horizon $N=12000$ and $2000$ predictive-resampling paths, using the same diagonal-Gaussian variational parameterisation as in MF-VI. At each predictive-resampling step we (i) propagate covariates forward by bootstrap-resampling from the observed rows of $X$, using a single shared resampled covariate stream across all paths, and (ii) impute the next response by sampling from the current variational predictive distribution under the Gaussian model. We then update the variational parameters using the optimal MF-VI update for linear regression: after augmenting the dataset with the newly imputed observation, we recompute the closed-form mean-field solution (posterior mean with covariance equal to the inverse diagonal of the posterior precision) and warm-start the next step from this solution.

\paragraph{MCMC.}
We run NUTS targeting the exact posterior over $\beta$ under the linear-Gaussian model, with 1000 warmup iterations and an initial 1000 retained draws, extending the run until the minimum effective sample size across coordinates reaches at least 2000. Standard diagnostics (ESS/$\hat R$) are used to monitor mixing, and the resulting draws are treated as the reference posterior for comparison.

\paragraph{Evaluation (coverage and efficiency).}
For each replicate, we compute marginal $90\%$ credible intervals for $\beta$ from each method’s draws (and, for the conjugate baseline, from the closed-form posterior $\mathcal{N}(m_n,S_n)$) and record coverage of the ground-truth $\beta^\star$ averaged over coordinates. We report mean coverage across replicates with a 95\% confidence interval based on the across-replicate standard error. For efficiency, we report minimum ESS/s for MCMC (min over coordinates) and paths/s for VPR, aggregated over replicates with 95\% confidence intervals.

\subsubsection{Logistic regression: highly-correlated feature simulation}\label{app:inference_logreg_sim}

We fit Bayesian logistic regression on the simulated design--response pair $(X,y)$ described above, with likelihood
\[
y_i \mid \beta \sim \mathrm{Bernoulli}\!\big(\sigma(x_i^\top \beta)\big),\qquad i=1,\dots,n,
\]
where $\sigma(t)=1/(1+e^{-t})$, and a spherical Gaussian prior
\[
\beta \sim \mathcal{N}(0,\tau_\beta^2 I_d),\qquad \tau_\beta=10.
\]

\paragraph{MF-VI.} We use a mean-field Gaussian variational family $q(\beta)=\prod_{j=1}^d \mathcal{N}(\beta_j\mid m_j,s_j^2)$ and maximise the ELBO by gradient ascent. To avoid Monte Carlo noise in the expected log-likelihood, we approximate the expected log-likelihood by Gauss--Hermite quadrature with $Q=20$ nodes; we also experimented with the deterministic softplus expectation bound of \citet{komodromos_logistic_2024} (Theorem~2.1 therein, truncated at level $\ell=12$), which is generally effective but noticeably more expensive in our setting. We optimise for $5{,}000$ Adam steps with learning rate $5\times 10^{-2}$ using full-data gradients, and use the resulting solution to initialise VPR.

\paragraph{VPR.} We run VPR with horizon $N=5{,}000$ and $1{,}000$ predictive-resampling paths, using the same diagonal Gaussian variational parameterisation as above. At each predictive-resampling step we (i) propagate covariates forward via the bootstrap resampling scheme applied to the observed rows of $X$, using a single shared resampled covariate stream across all paths, and (ii) impute the next label by sampling from the current variational predictive distribution induced by $q(\beta)$ and the logistic likelihood. We then update the variational parameters with $10$ warm-started Adam steps (learning rate $0.5$) using minibatches of size 15 drawn from the data available at that step: the newly imputed observation is always included, and the remaining batch elements are sampled uniformly from the previously available observations.
The variational expected log-likelihood term is rescaled by $(n+i+1)/b$ at each inner step, where $n+i+1$ is the size of the augmented dataset and $b$ is the minibatch size, so that the stochastic gradient remains an unbiased estimator of the full-dataset ELBO gradient.

\paragraph{MCMC.} We run NUTS targeting the exact posterior over $\beta$ under the logistic model, with $1{,}000$ warmup iterations and an initial $1{,}000$ retained draws, extending the run until the minimum effective sample size across coordinates reaches at least $1{,}000$. Standard diagnostics (ESS/$\hat R$) are used to monitor mixing, and the resulting draws are treated as the reference posterior for comparison.

\subsubsection{Logistic regression: real datasets}\label{app:inference_logreg_real}

We fit the same Bayesian logistic regression model as in \cref{app:inference_logreg_sim} on each of the five real datasets summarised
in \cref{tab:logreg_datasets}, retaining the spherical Gaussian prior $\beta \sim \mathcal{N}(0,\tau_\beta^2 I_d)$ with $\tau_\beta=10$. The inference setup mirrors \cref{app:inference_logreg_sim}; we describe only the differences.

\paragraph{MF-VI.} Same mean-field Gaussian variational family $q(\beta)=\prod_{j=1}^d \mathcal{N}(\beta_j\mid m_j,s_j^2)$. The expected log-likelihood is approximated by Gauss--Hermite quadrature with $Q=20$ nodes per coordinate. We optimise for $2{,}000$ Adam steps with learning rate $5\times 10^{-2}$ on the full training set, and use the resulting solution to initialise VPR.

\paragraph{VPR.} Same diagonal Gaussian variational parameterisation as
in MF-VI, initialised at the MF-VI solution. We use $L=1{,}000$ predictive-resampling paths and horizon $N=1{,}000$. At each predictive-resampling step we propagate covariates by the same bootstrap scheme as in \cref{app:inference_logreg_sim} (single shared resampled covariate stream across paths), impute the next label from the current variational predictive distribution, and update the variational parameters with $S$ warm-started Adam steps at learning rate $5\times 10^{-2}$ on a minibatch of size $100$ that always contains the newly imputed observation and $99$ uniformly sampled previously-available observations. We report results for $S=10$ in \cref{sec:logreg_datasets} in the main text and sweep $S\in\{2,10\}$ in \cref{app:logreg_real_ablations}.

\paragraph{MCMC.} Identical to \cref{app:inference_logreg_sim}: NUTS with $1{,}000$ warmup iterations and $1{,}000$ initial retained draws, extended until the minimum coordinate-wise ESS reaches at least $1{,}000$.

\paragraph{Evaluation (predictive accuracy, posterior similarity, and efficiency).}                     
For each train/test split we compare the methods on three axes. (i) \emph{Predictive accuracy}: we report the negative log predictive density (NLPD) on the held-out test set, expressed as the ratio $\mathrm{NLPD}_{\text{method}}/\mathrm{NLPD}_{\text{MCMC}}$ so values close to one indicate parity with the gold-standard MCMC predictive (lower is better). (ii) \emph{Posterior similarity}: since no ground-truth posterior is available, we treat the long-run NUTS draws as the reference and report the unbiased squared maximum mean discrepancy $\mathrm{MMD}^2$ \citep{gretton2012kernel} between each
method's posterior draws on $\beta$ and the MCMC reference, using an RBF kernel with bandwidth set by the median heuristic on the pooled, MCMC-whitened samples (lower is better). (iii) \emph{Wall-clock efficiency}: we report minimum ESS/s for MCMC (minimum across coordinates of $\beta$, divided by sampling time) and paths/s for VPR (number of completed predictive-resampling paths per second). All quantities are aggregated across the train/test splits and reported as the mean with a $95\%$ confidence interval based on the across-split standard error.

\subsubsection{Linear mixed random effects model}
We fit the LMRE model as described above. We place the priors
\(\beta \sim \mathcal{N}(0,\tau_\beta^2 I_d)\), \(\tau_\beta = 1\), and \(D \sim \operatorname{InvWishart}(\nu_0,S_0)\), \(\nu_0 = r + 2\), \(S_0 = I_r\), with group-specific random effects \(u_g \mid D \stackrel{\mathrm{iid}}{\sim} \mathcal{N}(0,D)\) for \(g = 1,\dots,G\).

\paragraph{MF-VI.}
We use a mean-field family that factorises across parameter blocks,
\(q(\beta,u,D) = q(\beta)\big(\prod_{g=1}^G q(u_g)\big)q(D)\), where \(q(\beta)\) and each \(q(u_g)\) are diagonal-covariance Gaussians, and \(q(D)\) is inverse-Wishart. We maximise the (full-data) ELBO with Adam for 5,000 steps using learning rate \(10^{-2}\). We use the resulting solution to initialise VPR.

\paragraph{VPR.}
We run VPR with horizon \(N=1000\) and 2000 predictive-resampling paths, using the same diagonal-Gaussian parameterisation for \(\beta\) and \(\{u_g\}\) and an inverse-Wishart factor for \(D\). At each predictive-resampling step we (i) propagate covariates forward by bootstrap-resampling from the observed rows of \((X,Z,\text{group index})\), using a single shared resampled covariate stream across all paths, and (ii) impute the next response by sampling from the current variational predictive distribution under the Gaussian observation model. We then update the variational parameters with $10$ warm-started Adam steps (learning rate \(10^{-2}\)) on a minibatch of size 50 drawn from the data available at that step; the newly imputed observation is not forcibly included in the optimisation minibatch. Since minibatching and unequal group sizes require care in hierarchical models, we optimise an unbiased minibatch ELBO estimator that appropriately reweights likelihood contributions and includes the correct KLD terms induced by the hierarchical prior \(u_g\mid D\) and the inverse-Wishart factor on \(D\).

\paragraph{MCMC.} We run NUTS on the full hierarchical model (including inference over \(D\)) with \(1{,}000\) warmup iterations and \(2{,}000\) retained samples. Mixing is monitored with standard diagnostics (ESS/\(\hat R\)), and the resulting draws are treated as the reference posterior when summarising coverage and efficiency.

\paragraph{Evaluation (coverage and efficiency).}
For each replicate, we compute marginal 90\% credible intervals for \(\beta\) and \(u\) from each method’s draws and record coverage of the ground-truth simulated values (averaged over coordinates, and over groups for \(u\)). We report the mean coverage over 100 replicates with a 95\% confidence interval based on the across-replicate standard error. For efficiency, we report minimum ESS/s for MCMC (min over \(\beta\) coordinates) and paths/s for VPR, again aggregated over replicates with 95\% confidence intervals.

\section{Additional results}
\subsection{Linear regression with optimal MF-VI updates} 
\paragraph{Most-correlated-plane (2D projection) diagnostic.}

To visualise posterior dependence in a high-dimensional parameter, we project posterior draws onto a two-dimensional subspace chosen to emphasise the strongest dependence implied by the exact posterior covariance. Concretely, using the exact posterior mean $m_n$ and covariance $S_n$, we select two orthonormal directions $u,v\in\mathbb{R}^d$ derived from $S_n$ and map each draw $\beta$ to the coordinates
\[
(w_1,w_2) \;=\; \big(u^\top \tilde{\beta},\, v^\top \tilde{\beta}\big),
\qquad
\tilde{\beta}_k \;=\; \frac{\beta_k-(m_n)_k}{\sqrt{(S_n)_{kk}}}.
\]
This yields a 2D view that concentrates on the most informative dependence directions under the true posterior geometry. The diagonal panels show the marginal densities of $w_1$ and $w_2$; the off-diagonal panel shows bivariate density contours. In \Cref{fig:most_correlated_plane} (shown for $d=10$, $n=30$), VPR and MCMC are nearly indistinguishable in both marginals and joint geometry in this projection, whereas MF-VI shows the expected mean-field distortion.

\begin{figure}
    \centering
    \includegraphics[width=0.5\linewidth]{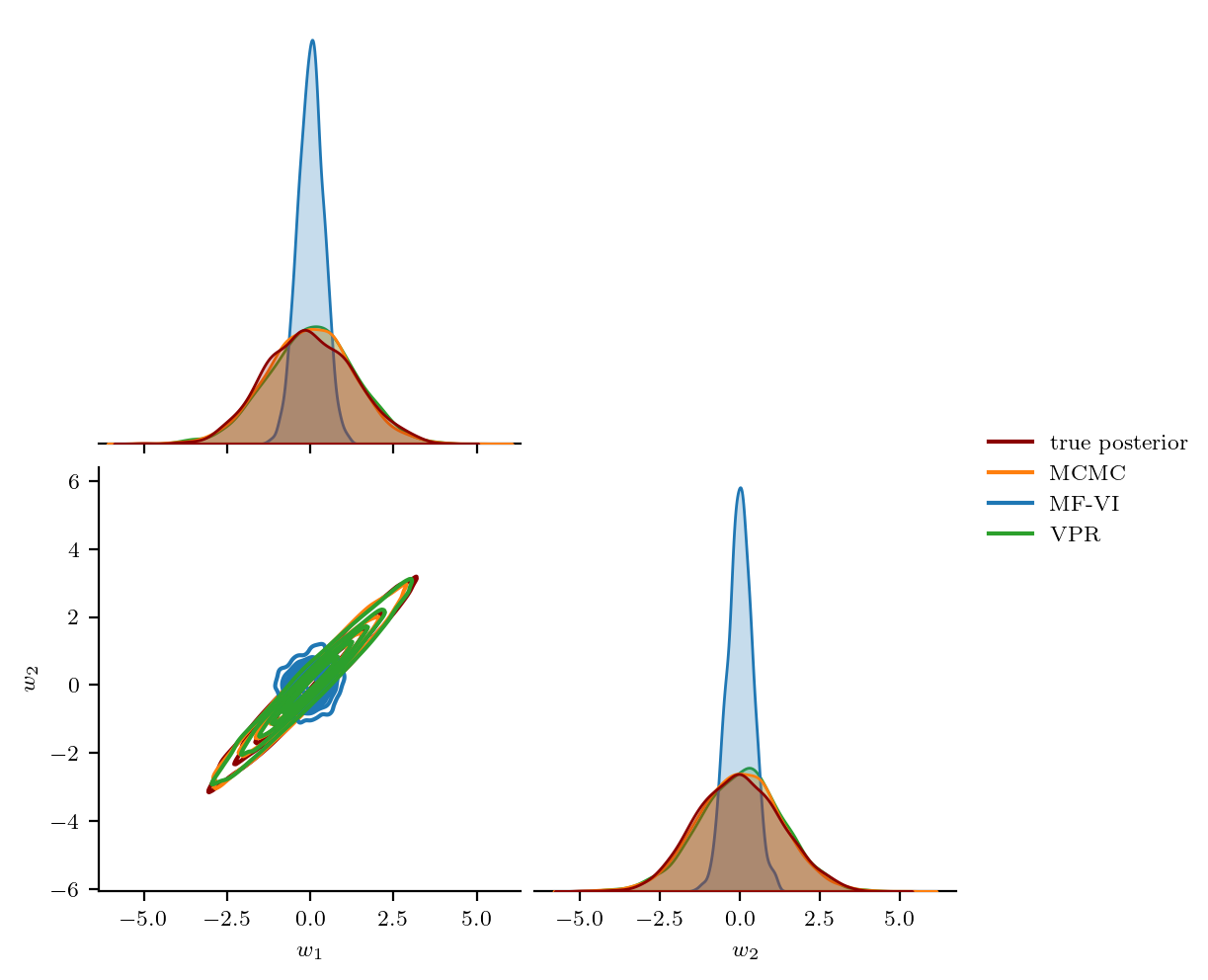}
    \caption{Most-correlated-plane projection ($d=10$, $n=30$). Posterior draws projected onto a 2D subspace selected from the exact posterior covariance.}
    \label{fig:most_correlated_plane}
\end{figure}

\subsection{Logistic regression}\label{app:logistic_experiments}

\subsubsection{Highly-correlated feature simulation}
\paragraph{Pathwise stability diagnostic.}
For the highly-correlated feature simulation generated as described in \cref{app:logistic_data} and discussed in \cref{sec:logreg_datasets},
\Cref{fig:logistic_jellyfish} visualises the VPR procedure by showing the evolution of VPR along the forward-resampling recursion.
Each line in the figure shows one of $1000$ resampling paths, reporting the value of one dimension of the variational mean parameter as a function of the resampling step.
While the trajectories exhibit high variability for the first few resampling steps, they quickly stabilise, indicating convergence.
Such a convergence is consistent with the intended sequential behaviour of VPR, supporting the use of a finite horizon $N$ as a practical approximation.

\begin{figure}[t]
    \centering
    \includegraphics[width=0.85\linewidth]{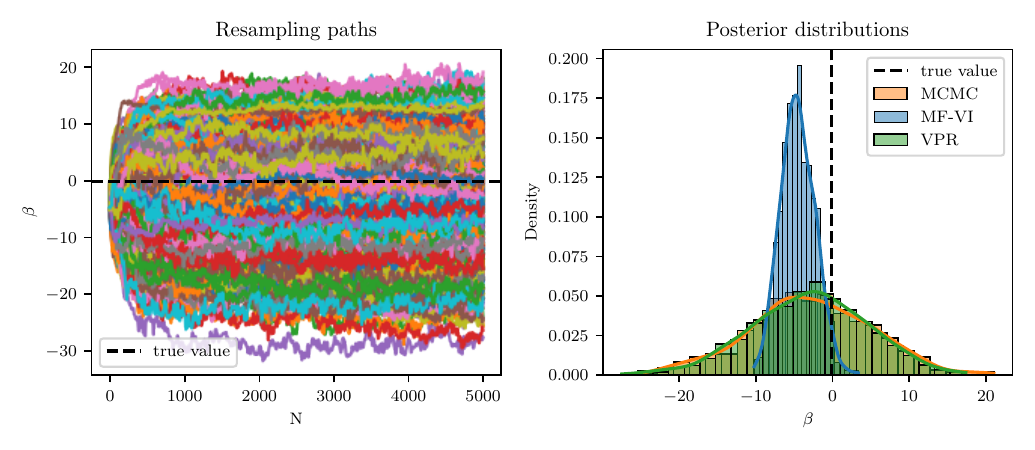}
    \caption{Evolution of the variational mean across predictive-resampling steps for a representative coefficient in Bayesian logistic regression.}
    \label{fig:logistic_jellyfish}
\end{figure}

\subsubsection{Real datasets} \label{app:logreg_real_ablations}

\paragraph{Ablation of $S$ for all datasets.}

\Cref{tab:logistic_real_data_ablated} reports VPR with inner-update $S=2$ and $S=10$ on all five datasets. Even at $S=2$, VPR's predictive performance is already on par with MCMC: NLPD ratios stay within roughly $1\%$ across all datasets, while MF-VI lags
noticeably on \texttt{skin}, and the $\mathrm{MMD}^2$ to MCMC is smaller than MF-VI's on every dataset, with the gap most pronounced on \texttt{skin} ($0.011$ vs $0.111$, roughly a $10\times$ reduction). Increasing $S$ to $10$ yields modest further reductions in $\mathrm{MMD}^2$ at a $\sim$$10\%$ cost in paths/s, with NLPD ratios essentially unchanged. VPR is therefore largely insensitive to $S$ over this range for the tested settings, and remains effective with very few inner updates per resampling step; $S=2$ already delivers most of the benefit at correspondingly higher throughput.
  
\begin{table*}[!t]
\centering
\caption{Logistic: Ablation of the VPR inner-update budget $S\in\{2,10\}$ across the five real datasets ($n_{\mathrm{tr}}=100$). Metrics match \cref{tab:logistic_real_data}: NLPD ratios to MCMC and  $\mathrm{MMD}^2$ between each method's posterior draws and the MCMC reference (lower is better), and wall-clock efficiency as minimum coordinate-wise ESS/s for MCMC and paths/s for VPR (higher is better). Entries are mean $\pm$ 95\% CI over $100$ random train/test splits; bold marks the best entry per metric within each dataset. MF-VI and MCMC do not depend on $S$ and are reported only on the first ($S=2$) row of each block.}
\label{tab:logistic_real_data_ablated}
\begin{tabular}{llcccccc}
\toprule
\multicolumn{2}{l}{} & \multicolumn{2}{c}{NLPD ratio} & \multicolumn{2}{c}{MMD$^2$ vs MCMC} & \multicolumn{2}{c}{Speed} \\
\cmidrule(lr){3-4}
\cmidrule(lr){5-6}
\cmidrule(lr){7-8}
Dataset ($d$) & $S$ & MF/MCMC & VPR/MCMC & MF-VI & VPR & ESS/s & paths/s \\ 
\midrule
german (20) & 2 & 1.008{\scriptsize$\pm$0.001} & 1.008{\scriptsize$\pm$0.002} & 0.012{\scriptsize$\pm$0.000} & 0.007{\scriptsize$\pm$0.000} & 202{\scriptsize$\pm$5} & \textbf{253{\scriptsize$\pm$
2}} \\
 & 10 &  & 1.008{\scriptsize$\pm$0.002} &  & \textbf{0.006{\scriptsize$\pm$0.000}} &  & 226{\scriptsize$\pm$1} \\
mozilla4 (5) & 2 & 1.003{\scriptsize$\pm$0.001} & 1.001{\scriptsize$\pm$0.001} & 0.019{\scriptsize$\pm$0.001} & 0.004{\scriptsize$\pm$0.001} & 186{\scriptsize$\pm$6} & \textbf{274{\scriptsize$\pm
$2}} \\
 & 10 &  & 1.001{\scriptsize$\pm$0.001} &  & \textbf{0.003{\scriptsize$\pm$0.000}} &  & 247{\scriptsize$\pm$2} \\
phoneme (5) & 2 & 1.001{\scriptsize$\pm$0.000} & 1.001{\scriptsize$\pm$0.001} & 0.007{\scriptsize$\pm$0.001} & 0.005{\scriptsize$\pm$0.001} & 249{\scriptsize$\pm$8} & \textbf{275{\scriptsize$\pm$
2}} \\
 & 10 &  & 1.002{\scriptsize$\pm$0.001} &  & \textbf{0.004{\scriptsize$\pm$0.001}} &  & 246{\scriptsize$\pm$1} \\
skin (3) & 2 & 1.033{\scriptsize$\pm$0.003} & 1.010{\scriptsize$\pm$0.002} & 0.111{\scriptsize$\pm$0.006} & 0.011{\scriptsize$\pm$0.001} & 143{\scriptsize$\pm$3} & \textbf{278{\scriptsize$\pm$2}}
 \\
 & 10 &  & \textbf{1.006{\scriptsize$\pm$0.001}} &  & \textbf{0.006{\scriptsize$\pm$0.001}} &  & 250{\scriptsize$\pm$2} \\
telescope (10) & 2 & \textbf{0.998{\scriptsize$\pm$0.001}} & 1.001{\scriptsize$\pm$0.001} & 0.045{\scriptsize$\pm$0.001} & 0.007{\scriptsize$\pm$0.000} & 173{\scriptsize$\pm$5} & \textbf{273{\scriptsize$\pm$2}} \\
 & 10 &  & 1.002{\scriptsize$\pm$0.001} &  & \textbf{0.005{\scriptsize$\pm$0.000}} &  & 244{\scriptsize$\pm$2} \\
\bottomrule
\end{tabular}
\end{table*}

\paragraph{Ablation of $N,S$ for \texttt{german}.}
\Cref{fig:german_ablation_frontier} plots posterior quality ($\text{MMD}^2$ to MCMC) against wall-clock time for $S \in \{1,2,10\}$ and different $N$ on \texttt{german}. Performance improves rapidly with $N$ at first, then plateaus around $N=800-1000$, with similar posterior quality for all tested $S$. Hence the practical gains of VPR are not tied to a single finely tuned $(N,S)$: several moderate settings achieve near-MCMC quality well before the MCMC wall-clock time (red dashed line). Once $N$ is moderately large, increasing $S$ beyond $S=1-2$ or further increasing $N$ yields only marginal gains at higher cost.

\paragraph{Ablation of $L$, $b$ for \texttt{german}.} 
\Cref{tab:german_ablation} varies batch size in the analysis of \texttt{german} and shows that very small $b$ (e.g. $10$) can hurt performance, but moderate values (e.g. $50$) give a good speed/accuracy trade-off, consistent with standard stochastic VI behaviour. It also ablates the number of paths $L$, and shows that as $L$ increases, posterior quality remains unchanged while efficiency increases, demonstrating that VPR benefits from parallelisation across paths and suggesting computational advantages over MCMC beyond those shown in the main results on our hardware.

\begin{figure}[t]                            
  \centering
  \resizebox{\linewidth}{!}{
  \begin{subfigure}[b]{0.49\linewidth}
      \centering 
      \vspace{0pt}
      \includegraphics[width=\linewidth]{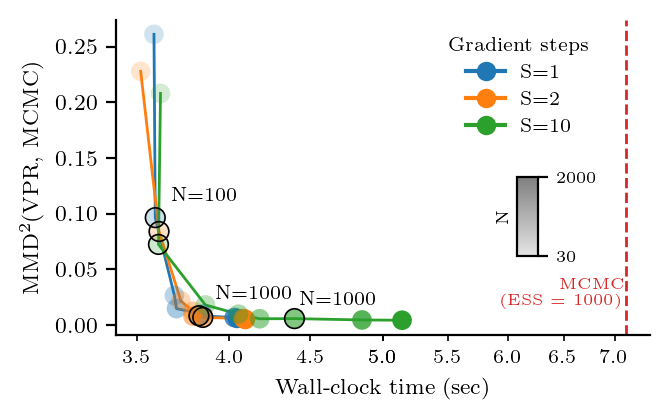}
      \caption{Accuracy--cost frontier}
      \label{fig:german_ablation_frontier}     
  \end{subfigure}
  \hfill                                
  \begin{subfigure}[b]{0.49\linewidth}
      \centering
      \vspace{0pt}
      \footnotesize
      \begin{tabular}{llc
c}
\toprule
ablation & setting & $\text{MMD}^2$ & 
paths/s \\
& & (MCMC)
\\
\midrule
{\scriptsize baseline} & {\scriptsize S=10, N=1K,} & 0.006 
         & 228 \\
         & {\scriptsize L=1K, b=100} & {\scriptsize± 0.001} 
         & {\scriptsize± 3}
         \\
\midrule
L & 5000 & 0.006 {\scriptsize± 0.001} & 
767 {\scriptsize± 7} \\
 & 10000 & 0.006 {\scriptsize± 0.001} & 
 1102 {\scriptsize± 9} \\
b & 10 & 0.040 {\scriptsize± 0.003} & 
257 {\scriptsize± 5} \\
 & 50 & 0.010 {\scriptsize± 0.001} & 
 241 {\scriptsize± 4} \\
\bottomrule
\end{tabular}
       \vspace{2.5em}
      \caption{Hyperparameter ablation}         
      \label{tab:german_ablation}
  \end{subfigure}   
  }
  \caption{German Statlog ($n{=}100$, $d{=}20$). (a): Accuracy--cost frontier for VPR  ($L{=}1000$ paths). Each curve fixes the number of gradient steps $S$; marker opacity increases with the resampling horizon $N \in \{30, 100, 500, 800,1000,1500, 2000\}$. Black-bordered markers highlight $N{=}100$ and $N{=}1000$. The dashed red line marks the wall-clock time for MCMC to reach ESS${=}1000$. (b): Sensitivity to hyperparameters; each row varies one setting from the baseline (top). Mean $\pm$ 95\% CI over 20 train/test splits.}             
  \label{fig:german_ablation}
\end{figure}

\subsection{Linear mixed effects}\label{app:lmre_experiments}
\paragraph{KDE pair plots (fixed and group-specific effects).}
To assess posterior agreement in the mixed-effects model, we report KDE pair plots for the joint posterior over the fixed effects $\beta$ and the group-specific random effects $u_g$ (one plot per group; see \cref{fig:lmre_pairplot_g1,fig:lmre_pairplot_g2,fig:lmre_pairplot_g3,fig:lmre_pairplot_g4}). Within each plot, the lower triangle shows bivariate KDE contours for MF-VI, VPR, and MCMC, while the upper triangle summarises dependence via Pearson correlations comparing MCMC versus VPR. Across groups and parameter pairs, VPR closely matches the MCMC reference in both dispersion and dependence structure, whereas MF-VI shows the expected mean-field artefacts and fails to reproduce key dependencies.

\begin{figure}[t]
    \centering
    \includegraphics[width=0.95\linewidth]{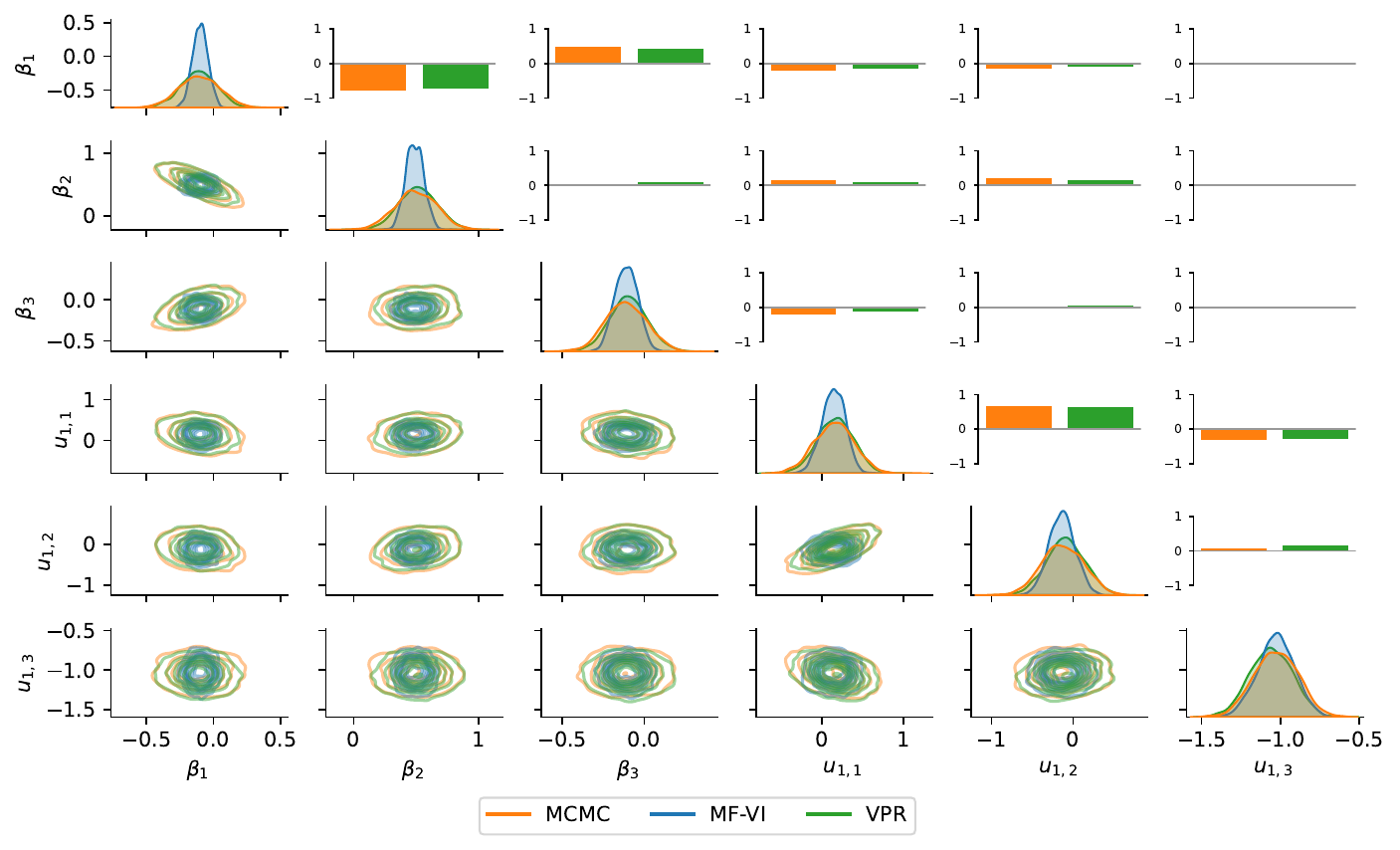}
    \caption{KDE pair plot for $(\beta, u_1)$. Lower: bivariate KDE contours for MF-VI/VPR/MCMC. Upper: Pearson correlations for MCMC vs VPR.}
    \label{fig:lmre_pairplot_g1}
\end{figure}

\begin{figure}[t]
    \centering
    \includegraphics[width=0.95\linewidth]{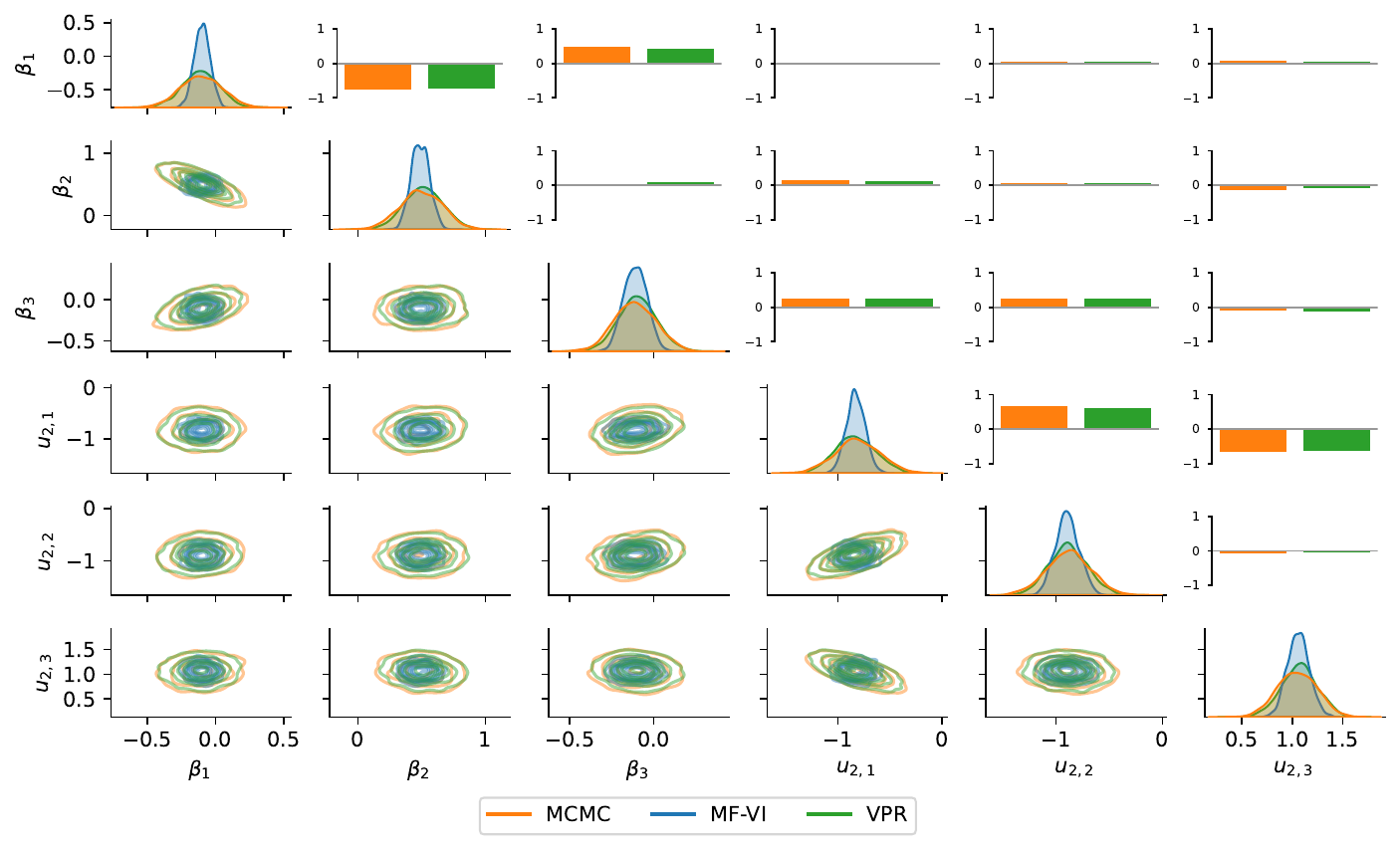}
    \caption{KDE pair plot for $(\beta, u_2)$. Lower: bivariate KDE contours for MF-VI/VPR/MCMC. Upper: Pearson correlations for MCMC vs VPR.}
    \label{fig:lmre_pairplot_g2}
\end{figure}

\begin{figure}[t]
    \centering
    \includegraphics[width=0.95\linewidth]{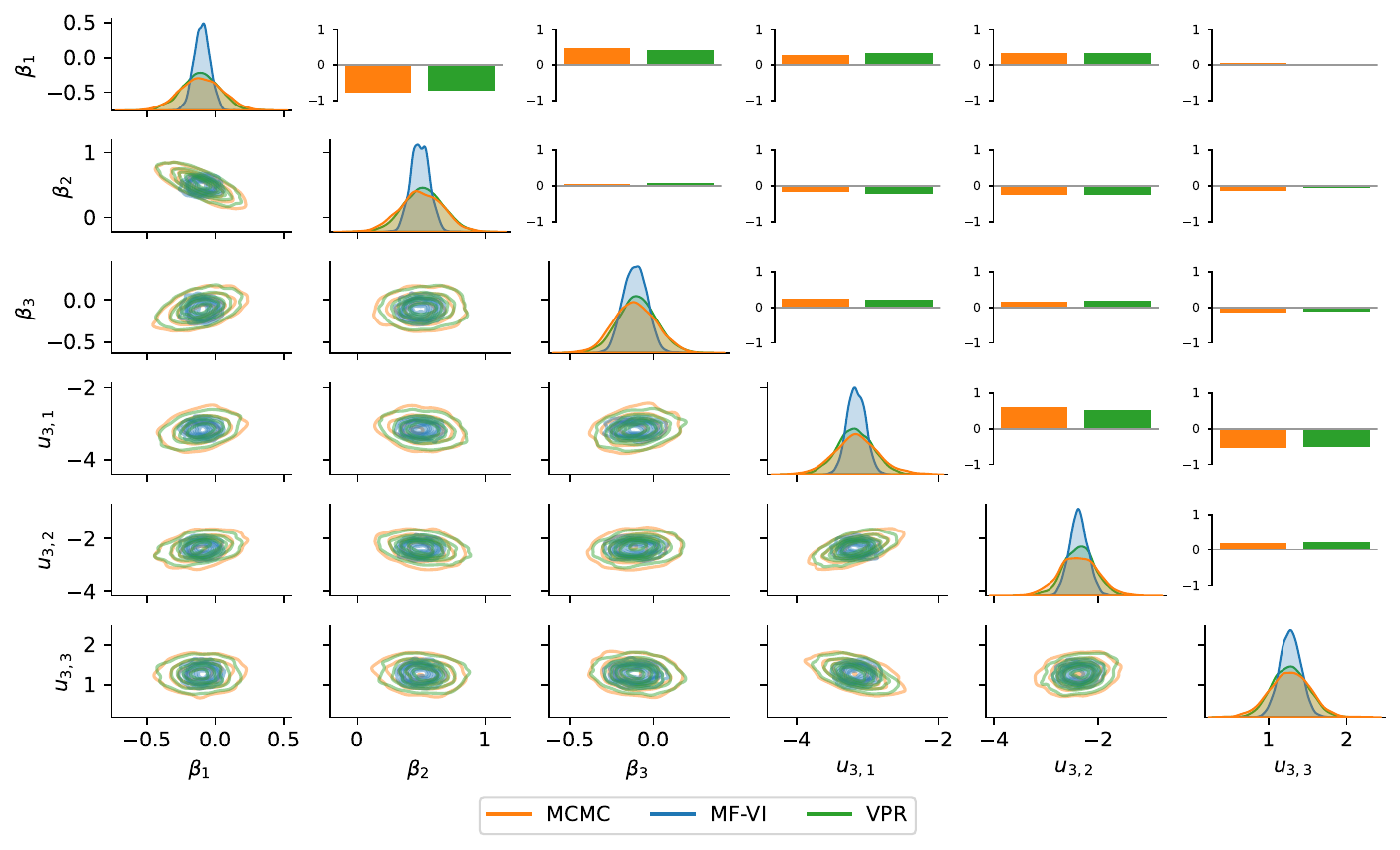}
    \caption{KDE pair plot for $(\beta, u_3)$. Lower: bivariate KDE contours for MF-VI/VPR/MCMC. Upper: Pearson correlations for MCMC vs VPR.}
    \label{fig:lmre_pairplot_g3}
\end{figure}

\begin{figure}[t]
    \centering
    \includegraphics[width=0.95\linewidth]{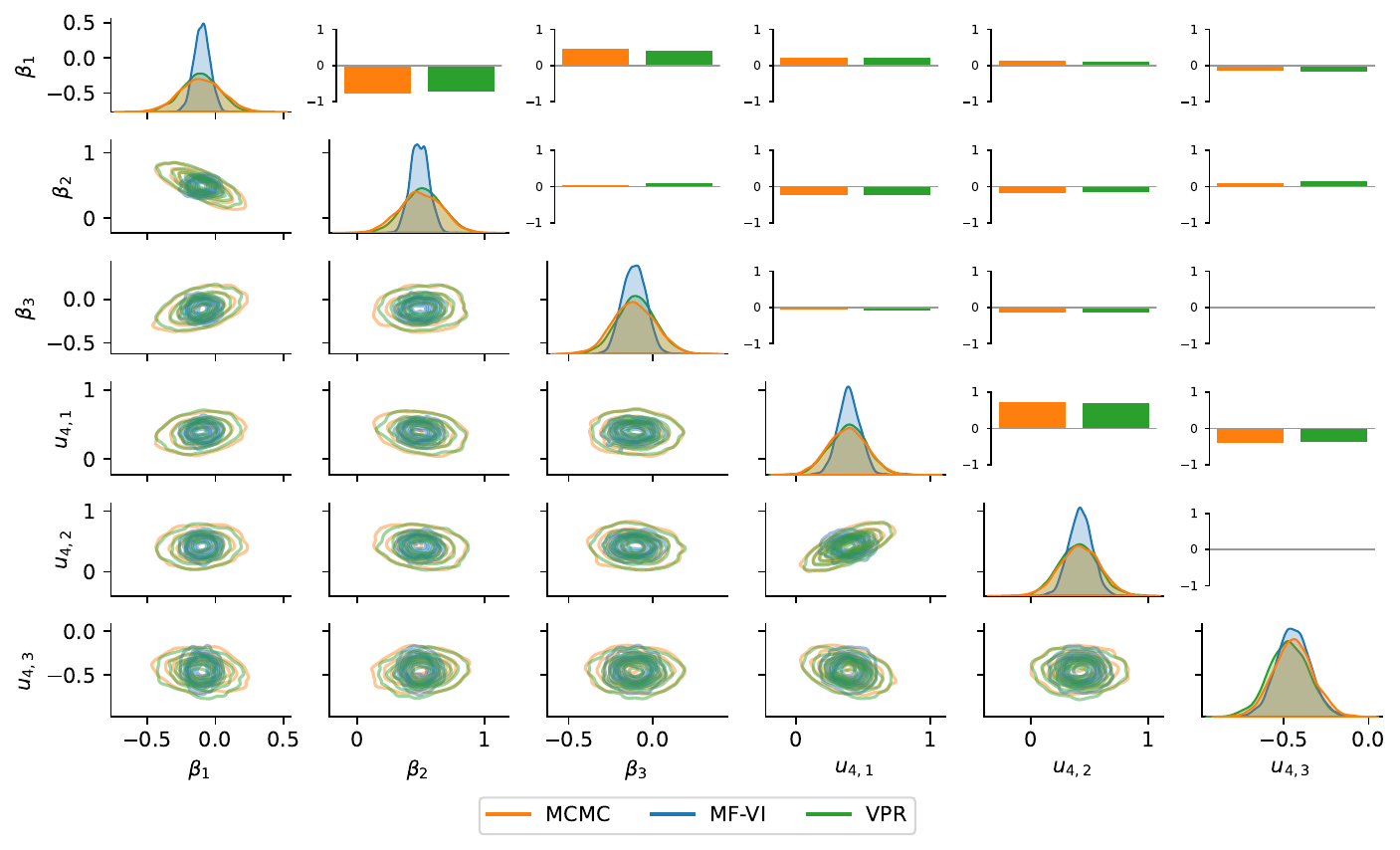}
    \caption{KDE pair plot for $(\beta, u_4)$. Lower: bivariate KDE contours for MF-VI/VPR/MCMC. Upper: Pearson correlations for MCMC vs VPR.}
    \label{fig:lmre_pairplot_g4}
\end{figure}

\end{document}